\newcommand{\E}{\mathbb{E}}
\newcommand{\ed}{\mathrm{d}}
\newcommand{\R}{\mathbb{R}}
\newcommand{\id}{{\bf 1}}
\renewcommand{\P}{\mathbb{P}}
\newcommand{\Q}{\mathbb{Q}}
\newcommand{\EZ}{\mathrm{EZ}}
\newcommand{\DT}{\ed {\cal T}(t)}
\newcommand{\1}{\mathbbm 1}
\newcolumntype{L}[1]{>{\raggedright\let\newline\\\arraybackslash\hspace{0pt}}p{#1}}
\DeclareMathOperator{\lcm}{lcm}
\DeclareMathOperator{\argmax}
\newtheorem{theorem}{Theorem}
\newtheorem{lemma}[theorem]{Lemma}
\newtheorem{remark}[theorem]{Remark}
\newtheorem{example}[theorem]{Example}
\newenvironment{axiom}[1]
{\inneraxiom}
{\endinneraxiom}
\theoremstyle{definition}
\newtheorem{definition}[theorem]{Definition}
\numberwithin{equation}{section}
\numberwithin{theorem}{section}
\title{Asymptotically Optimal Management of Heterogeneous Collectivised Investment Funds}
\author{John Armstrong, Cristin Buescu}
\begin{document}
		
\maketitle

\begin{abstract}
    A collectivised fund is a proposed form of pension investment, in which
    all investors agree that any funds associated with deceased members should be split
    among survivors.
    For this to be a viable financial product, it is necessary to know how to manage
	the fund even when it is heterogeneous: that is when different investors
	have different preferences, wealth and mortality. There is no obvious way to define a
	single objective for a heterogeneous fund, so this is not an optimal control problem.
	In lieu of an objective function, we take an axiomatic approach. Subject to our axioms
	on the management of the fund, we find an upper bound on the utility that can be achieved
	for each investor, assuming a complete markets and the absence of systematic longevity risk. We give a
	strategy for the management of such heterogeneous funds which achieves this bound asymptotically
	as the number of investors tends to infinity.
\end{abstract}	

\section*{Introduction}

We study the problem of maximizing the benefit one can obtain from one's pension
if one is willing to pursue a collective strategy. This is a strategy in which a group of individuals
agree that all assets left by an individual who dies are shared among the survivors.

By collectivising investments one counters idiosyncratic
longevity risk. Unlike an annuity which guarantees a fixed real-terms income until retirement,
a collectivised fund may pursue a risky strategy to take advantage of the equity risk-premium
and so yield a higher utility for investors. Additionally, a fund may also exploit
intertemporal substitution of consumption, by delaying consumption until investments have grown.
These considerations imply that a collectivised fund should out-perform both an annuity and a individually
managed pension funder. For a more detailed discussion of the benefits of collectivised pension investment, see \cite{ab-main}.

We will model a collective fund of $n$ individuals who have already retired and invested their
capital in the fund. We will then consider the optimal pattern of investment and consumption.

For simplicity we assume that the only stochastic risk factors
are given by the market and idiosyncratic mortality risk. Our modelling for the market, mortality and individual 
preferences is described in detail in \ref{sec:marketAndMortality}. We will also give a number of remarks on how systematic longevity risk could be included in the analysis.

In Section \ref{sec:homogeneousModel} we give a mathematical description of the optimal investment problem
for a homogeneous fund: that is to say, a collective fund where all the individuals are isomorphic. We also show that
an overall objective function for the fund can be derived
using the ideas of robust
optimization or the notion of ``The Veil of Ignorance'' introduced in \cite{rawls}. Both approaches yield the same objective function. The
management of the fund may then be written as an optimal control problem. We state the optimal control problems for both the cases of continuous and discrete time consumption.
We do not solve these problem in this paper. The discrete time problem is
solved analytically in the case of homogeneous Epstein--Zin preferences in the Black--Scholes model in \cite{ab-ez} building
on the techniques of \cite{merton1969lifetime,campbellViceira}.
A numerical algorithm for exponential Kihlstrom--Mirman preferences is given in \cite{ab-exponential}.

In Section \ref{sec:basicProperties} we prove some basic properties of collectivised pension investment. Specifically
we prove that, under very mild assumptions, collectivisation is always beneficial. We also give a sufficient condition
for constant consumption to be the optimal strategy. These conditions are extremely stringent and indicate that an annuity will typically be a suboptimal investment.

The mathematical novelty in the paper appears in Section \ref{sec:heterogeneous} where we consider how
to manage heterogeneous funds. These are funds where preferences, capital and mortality distributions vary between investors. This problem is challenging because it is not obvious how to combine each individual's preferences to
obtain a single objective for the fund as a whole. Rather than attempt this, we take an axiomatic approach
describing the properties that any management strategy for the fund should possess. From these axioms we are
able to deduce an upper bound on the utility of each investor in a heterogeneous fund, on the assumption that
the market is complete. Given the times-scale of pension investments this assumption is a very reasonable
approximation to reality. We are then able to describe a strategy that asymptotically achieves this bound as
the number of investors tends to infinity.

\section{Market, preference and mortality models}
\label{sec:marketAndMortality}

\subsection{Market model}

We will assume that a fund may invest in a market determined by
a filtered probability space $(\Omega^M, {\cal F}, {\cal F}^M_{t\in \R^+}, \P^M)$.
In particular we will only consider continuous time markets.

We assume
there are $k$ available assets and that the price of asset $j$ at time $t$ in real terms is given by $S^j_t$. Arbitrary quantities of assets can be bought or sold
at these prices. We assume that the asset $1$ is risk free,
so that $\ed S^1_t=r_t S^1_t \ed t$
where $r_t$ is the short-rate.
We assume as our no-arbitrage condition that there there is an equivalent measure $\Q^M$ such that $(S^0_t)^{-1}S^i_t$ is a $\Q$-Martingale
for each $i$.
We will call such a market an {\em infinitely-liquid} market.

\subsection{Mortality model}

We will later assume that consumption only occurs at times in a set ${\cal T}$ which
may be either $[0,T)$ or the evenly spaced time grid $\{ 0,\, \delta t,\,  2 \delta t,\,  3 \delta t, \ldots, T-\delta t \}$ where
$T$ is an upper bound on an individual's possible age which may be infinite. 
As a result we may assume that mortality events are also restricted to ${\cal T}$.
We write $\DT$ for the measure determined by the index set: this will be the Lebesgue measure on $[0,\infty)$ in the continuous case, or the sum of Dirac masses of mass $\delta t$ at each point in ${\cal T}$ for the discrete case. 

We assume that the random variables $\tau_i$ representing the time of death of individual $i$
are independent and absolutely continuous with respect to $\DT$, with distribution given by $p^i_t \, \DT$. We will 
write $(\Omega^{L}, {\cal F}^L, {\cal F}^{L}_t, \P^L)$ for the filtered probability
space generated by all the $\tau_i$, the filtration is obtained by requiring that each $\tau_i$ is a stopping time. We will write $F_\tau(t)$ for the distribution function of $\tau$.

The assumption of independence means that we are only considering idiosyncratic longevity
risk, but we will give some remarks on how systematic longevity risk will affect our findings later in the paper.

We will write $n_t$ for the number of survivors at time $t$, that is the number of individuals whose time of death is
greater than or equal to $t$.
This convention ensures that $n_0=n$ and works well with our convention that cashflows
received at the time of death are still consumed.
Note, however, that $n_{t+\delta t}$ will be ${\cal F}^L_t$ measurable.

\subsection{Preference model}

We assume that there are $n$ individuals. We model a ``pension outcome'' for individual $i$ as a pair $(\gamma^i,\tau_i)$ consisting of a stochastic process
$\gamma^i_t$, representing the rate of pension payments to individual $i$ at time $t$, and
the random variable $\tau_i$ representing the time of death. The underlying
filtered probability space will be denoted by $(\Omega, {\cal F}, {\cal F}_t, \P)$ and is assumed
to satisfy the usual conditions. The units
of $\gamma_t$ should be taken to be in real terms which ensures that our models
for inflation and preferences are separate.

We will later assume that consumption only occurs at times in a set ${\cal T}$.
It will occasionally be convenient to allow the cashflow $\gamma_t$ to be non-zero when $t>\tau$, but this cash will not be consumed.
In the discrete case we assume that cashflow at the moment of death $\gamma_\tau$ is still consumed. So the total consumption
over the lifetime of an individual is
\[
\int_0^\tau \gamma_t \, \DT.
\]

We will assume that the preferences of our individuals are
given by a $\R \cup \{ \pm \infty \}$ valued {\em gain function} ${\cal J}_i(\gamma,\tau_i)$ which acts on pension outcomes. The individual
prefers pension outcomes which yield higher values of the gain function. We will assume that the gain function takes the value $-\infty$
whenever $P(\gamma_t \id_{t\leq \tau}(t) < 0)>0$. Following \cite{ab-main} we make the following definitions.

\begin{definition}
The preferences are said to be {\em invariant} if they are invariant under filtration preserving automorphisms of the
probability space.
\end{definition}
\begin{definition}
We will say that ${\cal J}_i$ is {\em concave} if it is concave as a function of $\gamma$ for all $\tau_i$.
\end{definition}
\begin{definition}
The preferences are {\em monotonic} if ${\cal J}_i(\gamma, \tau) \leq {\cal J}_i(\gamma^\prime, \tau)$ if $\gamma_t \leq \gamma^\prime_t$
for all $t \in {\cal T}$.
\end{definition}
\begin{definition}
A gain function {\em does not saturate} if whenever
${\cal J}(\gamma, \tau)$ is finite we have $J(\gamma+\epsilon,\tau)>
{\cal J}(\gamma, \tau)$ for all positive $\epsilon$.
\end{definition}

\begin{example}
	\label{def:vnm}	
	{\em Von Neumann--Morgernstern preferences with mortality} are determined by a
	choice of concave, increasing utility function $u:\R_{\geq 0} \to \R$ and a discount rate $b$.
	The gain function is given by
	\[
	{\cal J}(\gamma,\tau) =
	\E\left( \int_0^\tau e^{-bt} u( \gamma_t )\, \DT \right).
	\]
\end{example}

\begin{example}
	{\em Exponential Kihlstrom--Mirman preferences with mortality} are determined by a
	choice of concave, increasing utility function $u:\R_{\geq 0} \to \R$.
	The gain function is given by
	\[
	{\cal J}(\gamma,\tau) =
	\E\left( -\exp\left( -\int_0^\tau u( \gamma_t )\, \DT \right) \right).
	\]
\end{example}

Both of these gain functions give rise to concave, invariant, monotonic preferences. As is
explained in detail in \cite{ab-main}, these gain functions are also {\em stationary} (meaning
that the preferences do not depend upon the time period being considered)
and {\em law-invariant} (meaning that the preferences depend only on the probability law
of the consumption and not the time at which information is received).

A larger class of stationary preferences over consumption may be obtained if one drops the requirement
that preferences are law-invariant. Such preferences were studied in discrete time 
by Kreps and Porteus, \cite{krepsPorteus}, and the particular case of Epstein--Zin preferences, introduced
in \cite{epsteinZin1}, has proved popular in applications as they allow for
separate risk-aversion and intertemporal substitution parameters while maintaining
mathematical tractability. Such preferences
have been used to resolve a number of asset pricing
puzzles, see for example \cite{bansalYaron,bansal,benzoniEtAl,bhamraEtAl}.

In \cite{xing}, a continuous time analogue of Epstein--Zin preferences
is defined
using BSDEs on probability spaces equipped with a filtration generated by Brownian motion, building on the continuous time analogues of Kreps--Porteus preferences
introduced in \cite{duffieEpstein}. A detailed literature review of the development of such preferences
is given in \cite{xing}. Motivated by \cite{xing} we may define continuous-time
Epstein--Zin preferences with mortality as follows.

\begin{example}
Suppose we are working with continuous time
mortality. Suppose that $\Omega=\Omega^M \times \Omega^L$ and that the filtered probability
space $\Omega^M$ is generated by d-dimensional Brownian motion. Let
$N^i_t={\1}_{\tau^i \leq t}$ be the jump process associated with
the death of individual $i$. Let $\Lambda^i_t$ be the predictable compensator
of $N^i_t$ given by
\begin{equation}
    \Lambda^i_t = \int_0^{t \wedge \tau^i} \lambda_s \, \ed s.
\end{equation}
where $\lambda_t$ is the force of mortality.
Let $M^i_t$ be the compensated martingale process defined by $M^i_t=N^i_t-\Lambda^i_t$.

Let ${\cal S}_2$ denote the set of $\R$
valued progressively measurable processes $\{Y_t\}$ $(0\leq t\leq T)$ such that
\[\|Y\|^2_{{\cal S}_2}:=\E[ \sup_{t\geq 0} |Y_{t \wedge T}|^2]<\infty.\]
Let $L^2(W)$ denote the set of $\R^d$
valued progressively measurable processes $Z_t$ $(0\leq t\leq T)$ such
that
\[ \|Z\|^2_{L^2(W)}:=\E[ \int_0^\infty \|Z_t\|^2 \ed t ]<\infty.
\]
Let $L^2(\lambda)$ denote the set of $\R^n$ valued progressively measurable
processes $\zeta_t$ $(0\leq t\leq T)$ such that
\[
\|\zeta\|^2_{L^2(\lambda)}:=\E[ \sum_{i=1}^n \int_0^{t \wedge \tau^i} |\zeta_s^i|^2 \lambda_s \,\ed s ] < \infty.
\]

Let $b>0$ be a discount rate. Let $0\neq\rho<1$ and
$0\neq\alpha<1$ be parameters determining the individual's
satiation and risk preferences.
The Epstein--Zin aggregator $f:[0,\infty) \times (-\infty,0] \to \R$ is defined by
\[
f(\gamma,v):=b \frac{ \alpha v }{\rho}
\left(
\left(
\frac{\gamma}{(\alpha v)^\frac{1}{\alpha}}
\right)^\rho - 1
\right).
\]
We will assume further that $\alpha<0$ and $0<\rho<1$.
These parameter restrictions are justified in \cite{xing} 
\footnote{
\cite{xing} uses a slightly different parameterization. Their
parameters $\delta$, $\gamma$ and $\psi$
are related to ours by $\delta=b$, $\gamma=1-\alpha$ and $\psi=\frac{1}{1-\rho}$.
}
on the grounds of empirical relevance.
We require one additional parameter $a>0$
which we call the {\em adequacy level}.

A stopping time $\tau$ is admissible if $\tau < T$
almost surely.
Given an admissible stopping time $\tau$, the set of admissible consumption streams is defined by
\[
{\cal C}:=\{ \gamma \in {\cal R}_+
\mid \E[ \int_0^\tau e^{-b s} \gamma_s \,\ed s]<\infty
\}
\]
where ${\cal R}_+$ is the set of all non-negative
progressively measurable processes.
Given an admissible consumption and mortality $(\gamma_t,\tau)$
we define $(V_t,Z_t,\zeta_t)$ to be the solution of the BSDE
\begin{equation}
\ed V_t =
f(\gamma_t, V_t) {\1}_{t \leq \tau} \, \ed t
- Z_t \, \ed W_t
- \sum_{i=1}^n \zeta_t^i \, \ed M^i_t, \quad 0 \leq t \leq T; \quad V_T = U
\label{eq:bsde1}
\end{equation}
where $(V,Z,\zeta) \in {\cal S}_2 \times L^2(W) \times L^2(\tilde{\lambda})$ and where we set $U=\frac{a^\alpha}{\alpha}.$
The existence and uniqueness of solutions of the
BSDE \eqref{eq:bsde1} may be established using the technique of the 
proof of Proposition 2.2 
of \cite{xing} combined with the properties
of BSDEs with default jumps established in \cite{dumitrescu}.
Since we assume that $\tau$ is bounded above by $T$,
the techniques of \cite{aurandHuang} are not
required.

We may then define the Epstein--Zin utility with mortality for adequacy level $a$ associated with $(\gamma,\tau)$
by
\[
{\cal J}(\gamma,\tau):=V_0.
\]
\end{example}

\bigskip

In the classical Epstein--Zin utility without
mortality, the term $U$ in \eqref{eq:bsde1}
is a ${\cal F}_\tau$-measurable random variable given by a $U=\frac{c_\tau^\alpha}{\alpha}$ where $c_\tau$ denotes a bequest at time $\tau$.
Such a formulation is attractive mathematically as it
ensures the preferences are positively homogeneous. This
symmetry then allows one to reduce the dimension
of control problems involving such preferences.

However, we note that when $\alpha<0$, this modelling choice would associate an infinitely negative utility
to dying without a bequest. No amount of lifetime consumption would be sufficient to overcome this. This does not seem a plausible model for typical pension preferences.

In our model, we call $a$ an adequacy level because the
outcome $(\gamma,\tau)$ has the same utility as
$(\tilde{\gamma},T)$ where $\tilde{\gamma}$
is equal to $\gamma$ up to time $\tau$ and equal to
$a$ thereafter (this is because $f(a,U)=0$). Thus an investor is indifferent to dying or living at the 
adequacy level. 
The importance of the adequacy level in pension
modelling is studied in \cite{ab-main}.

The techniques of \cite{xing} allow one to
prove that Epstein--Zin preferences with mortality are concave and monotone. 

\begin{lemma}
Epstein--Zin preferences with mortality are invariant.	
\end{lemma}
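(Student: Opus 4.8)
The plan is to show that a filtration-preserving automorphism $\phi$ carries the solution of the defining BSDE \eqref{eq:bsde1} for an outcome $(\gamma,\tau)$ to the solution of \eqref{eq:bsde1} for the outcome $(\phi^*\gamma,\phi^*\tau):=(\gamma\circ\phi,\tau\circ\phi)$, and then to read off invariance of ${\cal J}$ from the fact that $V_0$ is ${\cal F}_0$-measurable with ${\cal F}_0$ $\P$-trivial. Throughout, $\phi$ and $\phi^{-1}$ are measurable, preserve $\P$, and satisfy $\phi^{-1}({\cal F}_t)={\cal F}_t$ for all $t$; I write $\phi^*X:=X\circ\phi$ for random variables and, componentwise in $t$, for processes.

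First I would collect the elementary functorial properties of $\phi^*$: it is a linear isometry of $L^p(\Omega)$ and of the process spaces appearing in \eqref{eq:bsde1}; it preserves ${\cal F}_t$-measurability and progressive measurability; it fixes deterministic quantities (in particular $\phi^*U=U$); it commutes with conditional expectation, $\phi^*\,\E[X\mid{\cal F}_t]=\E[\phi^*X\mid{\cal F}_t]$ for integrable $X$, because $\phi^{-1}({\cal F}_t)={\cal F}_t$ and $\P\circ\phi^{-1}=\P$; and it commutes with pathwise integration in $t$, since for each fixed $\omega$ one merely has an ordinary integral. A short computation then shows that $\phi^*$ maps admissible outcomes to admissible outcomes: $\{\phi^*\tau\le t\}=\phi^{-1}\{\tau\le t\}\in{\cal F}_t$ so $\phi^*\tau$ is a stopping time, it is $<T$ almost surely, $\phi^*\gamma$ is non-negative and progressively measurable, and $\E[\int_0^{\phi^*\tau}e^{-bs}(\phi^*\gamma)_s\,\ed s]=\E[\int_0^{\tau}e^{-bs}\gamma_s\,\ed s]<\infty$.

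The crucial observation is that the Epstein--Zin aggregator $f(\gamma,v)$ does not involve $(Z,\zeta)$. Hence, integrating \eqref{eq:bsde1} over $[t,T]$ and taking ${\cal F}_t$-conditional expectations — the stochastic integrals against $W$ and the $M^i$ being true martingales by the integrability built into the solution space — shows that the $V$-component of the solution of \eqref{eq:bsde1} for $(\gamma,\tau)$ is the unique ${\cal S}_2$ solution of the scalar fixed-point equation
\[
V_t=\E\!\left[\,U-\int_t^T f(\gamma_s,V_s)\,\1_{s\le\tau}\,\ed s \;\Big|\; {\cal F}_t\right].
\]
(Uniqueness here is equivalent to the assumed uniqueness for \eqref{eq:bsde1}: any ${\cal S}_2$ solution of the fixed-point equation yields, via the predictable representation property of $(W,M^1,\dots,M^n)$, integrands $(Z,\zeta)$ in the required spaces for which $(V,-Z,-\zeta)$ solves \eqref{eq:bsde1}.) Applying $\phi^*$ to this equation and using the properties above, together with $\phi^*U=U$, shows that $\phi^*V\in{\cal S}_2$ solves the corresponding fixed-point equation for the admissible outcome $(\phi^*\gamma,\phi^*\tau)$; by uniqueness, $\phi^*V$ is the $V$-component of the solution of \eqref{eq:bsde1} for $(\phi^*\gamma,\phi^*\tau)$. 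Evaluating at $t=0$ and using that $V_0$ is ${\cal F}_0$-measurable, hence (as ${\cal F}_0$ is $\P$-trivial) deterministic and $\phi^*$-invariant, gives ${\cal J}(\phi^*\gamma,\phi^*\tau)=(\phi^*V)_0=V_0={\cal J}(\gamma,\tau)$.

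I expect the only genuinely non-routine point to be the reduction of the BSDE \eqref{eq:bsde1} to the scalar fixed-point equation — specifically, checking that $\int_t^T|f(\gamma_s,V_s)|\,\1_{s\le\tau}\,\ed s$ is integrable when $V\in{\cal S}_2$ and $(\gamma,\tau)$ is admissible, so that the martingale and representation arguments are legitimate; this is handled by the growth estimates for the Epstein--Zin aggregator established in \cite{xing}. Everything else is bookkeeping about $\phi^*$. (One could instead argue directly on \eqref{eq:bsde1}, observing that $\phi^*W$ is again an $({\cal F}_t)$-Brownian motion and $\phi^*M^i$ the compensated martingale of $\phi^*N^i$, so that $(\phi^*V,\phi^*Z,\phi^*\zeta)$ solves the pulled-back BSDE; the scalar reduction is preferable because it sidesteps having to identify the driving martingales.)
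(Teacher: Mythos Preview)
Your proof is correct and takes a genuinely different route from the paper's. The paper argues directly on the BSDE: it pulls back the full solution $(V,Z,\zeta)$ along $\phi$, obtaining a BSDE driven by $W\circ\phi$ and $M^i\circ\phi$, and then invokes the martingale representation theorem for processes with default jumps to rewrite the martingale part back in terms of the original drivers $W$ and $M^i$, yielding a solution $(V\circ\phi,\tilde{Z},\tilde{\zeta})$ of \eqref{eq:bsde1} for $\gamma\circ\phi$. You instead exploit the fact that the aggregator does not involve $(Z,\zeta)$ to project out the martingale part entirely, reducing to the scalar fixed-point equation $V_t=\E[\,U-\int_t^T f(\gamma_s,V_s)\,\1_{s\le\tau}\,\ed s\mid{\cal F}_t\,]$; invariance then follows purely from the commutation of $\phi^*$ with conditional expectation, and martingale representation is needed only to recover the equivalence with \eqref{eq:bsde1} and hence its uniqueness. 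Your route is arguably cleaner in that it never requires identifying the pulled-back noises $\phi^*W$ and $\phi^*M^i$; the paper's route is terser and would extend without change to drivers that depend on $(Z,\zeta)$. The parenthetical alternative you sketch at the end is essentially the paper's argument.
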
	
\begin{proof}
Let $\phi:\Omega \to \Omega$ be a filtration automorphism. Let $(V,Z,\zeta) \in {\cal S}_2 \times L^2(W) \times L^2(\tilde{\lambda})$
be the solution to \ref{eq:bsde1}. Then we have
\[
\ed (V\circ\phi)_t =
f((\gamma\circ\phi)_t, (V\circ\phi)_t) {\1}_{t \leq \tau} \, \ed t
- (Z\circ\phi)_t \ed (W\circ \phi)_t
- \sum_{i=1}^n (\zeta\circ\phi)^i_t \, \ed (M^i\circ\phi)_t
\]
for all $0 \leq t \leq T$ and $\quad V_T\circ\phi = U$. By the martingale
representation theorem for processes with default jumps (see e.g.\ \cite{jeanblancEtAl}) we may find
$\tilde{Z}_t \in L^2(W)$ and $\tilde{\zeta} \in L^2(\tilde{\lambda})$
such that
\[
\ed (V\circ\phi)_t =
f((\gamma\circ\phi)_t, (V\circ\phi)_t) {\1}_{t \leq \tau} \, \ed t
- \tilde{Z}_t \ed W_t
- \sum_{i=1}^n \tilde{\zeta}_t^i \, \ed M^i_t,\quad 0 \leq t \leq T.
\]
Hence the Epstein--Zin utility with mortality for $\gamma\circ\phi$ is equal to $(V\circ\phi)_0=V_0$ (the last
equality follows since ${\cal F}_0=\{\emptyset,\Omega\}$ and so $V(\omega)_0$ is independent of $\omega$).
\end{proof}

\section{Managing homogeneous funds}
\label{sec:homogeneousModel}

In this section we consider how
to manage homogeneous funds. We call a fund homogeneous if all individuals in the fund have identical
preferences, wealth and mortality.  We will consider the management of
heterogeneous funds in \ref{sec:heterogeneous}.

Since we assume that each individual has an identical mortality distribution we may
define $p_t:=p^i_t$.

Let us first consider the case of finite $n$.

We wish to decide how
to manage a collective pension fund where individual contributes an amount $X_0$ at time $0$. Individual $i$ will receive an income $\gamma^i_t$ at
time $t$, with $\gamma^i_t=0$ if the individual is dead at that time. Any cash that is yet to be consumed is invested in the market. There is no bequest when an individual dies, all remaining cash is shared with the fund.

\bigskip

We need to choose an objective for the fund itself. We informally outline two
proposals which we will then explain in more detail.
\begin{enumerate}[(i)]
	\item  Suppose that the individual gain function is law-invariant
		   and so depends only on the distribution of outcomes. We may
		   understand ``distribution'' to mean distribution in both probability and distribution across the population. In this
		   way the individual gain function gives rise to  an objective for the entire fund. We will call this
		   the ``distribution approach''.
	\item  We follow the robust optimization approach to managing
		   the fund: we maximize the infimum of the individual gain functions.
\end{enumerate}	

Let us explain the mathematical detail required for the distribution approach.
We suppose that the individual gain function is law-invariant. We define a discrete uniformly distributed random variable $\iota$ 
which takes values in $\{1, \ldots, n\}$.
We write $(\Omega^\iota, \sigma^\iota, \P^\iota)$ for the probability space generated by $\iota$. We define a filtration ${\cal F}^\iota_{t \in {\R^+ \cup \{ \infty \}}}$ by
\[
{\cal F}_t^\iota = \begin{cases}
\{ \Omega^\iota, \emptyset \} & t < \infty \\
\sigma^\iota & t= \infty.
\end{cases}
\]
Thus $\Omega^\iota$ represents a random choice of individual made at time $\infty$.
If we have a law-invariant individual gain function ${\cal J}_\iota$ defined relative
to a probability space $\Omega$, we can then define a gain function relative to $\Omega \times \Omega^\iota$ by requiring
\begin{equation}
{\cal J}^D(\gamma)={\cal J}_\iota(\gamma^\iota, \tau_\iota).
\label{eq:distributionObjective}
\end{equation}
Note that since $\tau_\iota$ is not a stopping time, this gain function
can only be given a meaning for law-invariant individual gain functions
${\cal J}_\iota$.

The gain function for the robust approach is given by
\begin{equation}
{\cal J}^R(\gamma):=\inf_{i \in I_0} {\cal J}_i(\gamma^i, \tau_i).
\label{eq:robustObjective}
\end{equation}

These two approaches are not equivalent in general. Consider the Biblical problem faced by Solomon of distributing a child among two women who claim to be its mother.
In the distribution approach, giving the child to a randomly selected woman would be optimal. In the robust approach, giving neither woman the child would be an equally optimal alternative. For concave individual gain functions, Solomon's recommended approach of splitting the child in two would be ideal.

Having decided on a gain function ${\cal J}$ for our fund, we can write down the associated optimization problem.

We may also augment our probability space with
a filtered probability space $\Omega^\perp$ so 
that any arbitrary decisions that need to be made (such as choosing a random woman to give the child) can be made using random variables defined on this space. We then take $\Omega = \Omega^M \times \Omega^L \times \Omega^\iota \times \Omega^\perp$ equipped with the product
filtration ${\cal F}_t$ and product measure $\P$. As one might expect, under reasonable
conditions, $\Omega^\perp$ proves to be irrelevant, and the optimal strategies can be taken
to be $\Omega^M \times \Omega^L \times \Omega^\iota$ measurable. See Remark \ref{remark:omegaPerp} below. Note that when working with Epstein--Zin preferences we will always assume $\Omega^\perp$ is trivial to ensure that the preferences are defined.

We then wish to choose progressively measurable consumption streams $\gamma^i_t \geq 0$ with $t \in {\cal T}$ and investment proportions
$\alpha^j_t$ in asset $j$ with $t \in \R^+$.
We require $\sum_{j=1}^k \alpha^j_t\leq 1$.
We have an inequality in this  equation, because it is
acceptable, if sub-optimal, to simply discard assets.
We must choose these processes
such that the total wealth of the fund is always non-negative.
We write
${\cal A}$ for the resulting set of admissible controls $(\bm{\gamma},\bm{\alpha})$,
and we will now give  a fully precise mathematical description of this set.

First let us write down the dynamics of the fund value. For continuous time consumption the fund value 
satisfies the SDE
\begin{equation}
\ed F_t = \sum_{i=1}^k \alpha^i_s F_s \ed S^i_s - \sum_{i=1}^n \gamma^i_t \, \ed t
\label{eq:fundValueCts}
\end{equation}
with $F_0=\sum_{i=1}^n B_i$, where $B_i$ is the initial budget
of individual $i$ (in this section we are assuming that all individuals
have identical initial budget $B_i=X_0$, but the equations for the more general case
will be useful later). For discrete
time consumption, let $F_t$ denote the fund value before consumption and
$\overline{F}_t$ denote the fund value after consumption. We then
have the following budget equations for the dynamics of $F_t$ and $\overline{F}_t$.
\begin{equation}
\begin{split}
F_t &= \begin{cases}
\sum_{i=1}^n B_i & \text{t = 0} \\
\lim_{h\to 0+}\overline{F}_{t-h} &  t \in {\cal T}\setminus\{0\} \\
\overline{F}_t & \text{otherwise.}
\end{cases} \\
\overline{F}_t &=
\begin{cases}
F_t - \sum_{i=1}^n \gamma^{i}_t  \\
\overline{F}_{t^\prime} + \sum_{i=1}^k \int_{t^\prime}^t \alpha^i_s \overline{F}_s \, \ed S^i_s
& t^\prime \in {\cal T} \text{ and } t^\prime\leq t < t^\prime+\delta t.
\end{cases}
\end{split}
\label{eq:fundValue}
\end{equation}
A strategy is $(\bm{\gamma},\bm{\alpha})$ is admissible if it ensures $F_t \geq 0$ and $\overline{F}_t \geq 0$ at all times. Hence
\begin{equation}
{\cal A}=\{(\bm{\gamma},\bm{\alpha})\in {\cal PM} \mid F_t \geq {0} \text{ and } \overline{F}_t \geq 0, \forall t \}
\label{eq:admissibleControls}
\end{equation}
where ${\cal PM}$ is the set of progressively measurable $\R^n \times \R^k$ valued processes for the probability space $\Omega$.

Our objective is to compute
\begin{equation}
v_n = 
\sup_{(\bm{\gamma},\bm{\alpha}) \in {\cal A}} {\cal J}(\gamma).
\label{eq:fundObjective}
\end{equation}
and to find $(\bm{\gamma}, \bm{\alpha})$ achieving this supremum.

We will henceforth assume that our individual gain functions are concave.
Since the market is positively homogeneous, given a strategy $(\bm{\gamma},\bm{\alpha})$
we may form a new strategy $(\bar{\bm{\gamma}},\bm{\alpha})$ which assigns
the mean consumption to all survivors:
\[
\bar{\gamma}^i_t =
\begin{cases}
0 & \tau_i < t \\
\frac{1}{n_t} \sum_{j=1}^n \gamma^j_t & \text{otherwise}.
\end{cases}
\]
The concavity of our individual gain functions ensures that for
our fund's gain function we have
\[
{\cal J}(\bar{\gamma}^i,\alpha) \geq {\cal J}(\gamma^i,\alpha).
\]
So we may assume without loss of generality that all survivors consume the same amount $\gamma_t$ at time $t$. Under this assumption we have that ${\cal J}^D={\cal J}^R$, so the distinction between the distributional approach
and the robust approach will not in fact be important.

Our motivation for introducing the two approaches is that we
believe the distributional approach corresponds more closely to the
intuitive notion of optimal investment for a collective fund, however the
robust approach is required if we wish to use Epstein--Zin utility
as the individual gain function.
To justify our claim that the distributional approach is more intuitive we first note the example of Solomon above. A second justification is
given by the concept of the ``Veil of Ignorance'' described in \cite{rawls}.
This concept suggests that when making collective decisions one should
make those decisions as though the identity of the individuals was unknown.
Our filtered probability space $\Omega^\iota$ represents this veil of ignorance: the veil being lifted at time $\infty$, but the control
$(\gamma, \alpha)$ is chosen in a state of ignorance.

\bigskip

Having decided that the consumption, $\gamma_t$, should be the same
for all individuals, we may now consider how to model
infinite collectives ($n=\infty$). When performing accounting
calculations in this case, we will perform all calculations on a
per-individual basis. For example, rather than keep track of the total fund value which would be infinite, we keep track of the fund value per individual which will be finite. We will assume that a deterministic
proportion of the original individuals dies over each time interval given by the integral of $p_t \, \DT$ over that interval. 
This assumption allows us to include mortality within our accounting.

Let us express this precisely.
Let $Y_t$ represent the fund
value per individual at time $t$ before consumption or mortality,
and let $\overline{Y}_t$ represent the fund value per individual after consumption. Then in the continuous time case we have
\begin{equation}
\ed Y_t = \sum_{i=1}^k \alpha^i_s Y_s \ed S^i_s - \pi_t \gamma_t \, \ed t
\label{eq:fundValuePerIndividualCts}
\end{equation}
where $\pi_t$ is the proportion of individuals surviving to time $t$.
In the 
discrete time case we have
\begin{equation}
\begin{split}
Y_t &= \begin{cases}
X_0 & \text{t = 0} \\
\lim_{h\to 0+}\overline{Y}_{t-h} &  t \in {\cal T}\setminus\{0\} \\
\overline{Y}_t & \text{otherwise.}
\end{cases} \\
\overline{Y}_t &=
\begin{cases}
Y_t - \pi_t \gamma_t  & t \in {\cal T} \\
\overline{Y}_{t^\prime} + \sum_{i=1}^k \int_{t^\prime}^t \alpha^i_s \overline{Y}_s \, \ed S^i_s
& t^\prime \in {\cal T} \text{ and } t^\prime\leq t < t^\prime+\delta t.
\end{cases}
\end{split}
\label{eq:fundValuePerIndividual}
\end{equation}
For the case $n=\infty$, equations \eqref{eq:fundValuePerIndividualCts}
and \eqref{eq:fundValuePerIndividual} define the process $Y_t$. In
the case $n=\infty$, $\pi_t=1-F_\tau(t)$ is deterministic. For
the case of finite $n$, equations \eqref{eq:fundValuePerIndividualCts}
and \eqref{eq:fundValuePerIndividual} follow from
\eqref{eq:fundValueCts} and \eqref{eq:fundValue}. In the
case of finite $n$, $\pi_t=\frac{n_t}{n}$ is a random variable.

In the case $n=\infty$ we take as the gain function for our fund
\begin{equation}
{\cal J}(\gamma):={\cal J}_1(\gamma, \tau_1).
\label{eq:robustObjectiveInfinite}
\end{equation}
This is reasonable since we have assumed that all living
individuals receive the same consumption stream $\gamma_t$
and have isomorphic preferences. Alternatively
if the individual gain function is law-invariant we may define
a random variable $\tau_\iota$ which is measurable only at $\infty$
and which has distribution $p_t \ed t$. We may then write the
gain function as
\begin{equation}
{\cal J}(\gamma):={\cal J}_\iota(\gamma, \tau_\iota).
\label{eq:distributionObjectiveInfinite}
\end{equation}
These two formulations will be equivalent, but
\eqref{eq:distributionObjectiveInfinite} seems a more intuitive
formulation.

We define the set ${\cal A}$ of admissible strategies
in the case $n=\infty$ by saying that a
strategy is admissible if $Y_t \geq 0$ for all time.
We may now define $v_\infty$ via equation \eqref{eq:fundObjective}
as before.

We note that our approach to treating of the case $n=\infty$ is
at present rather heuristic,
but it will be justified rigorously via limiting arguments in the next section.

\section{Basic properties of collectivised investment}
\label{sec:basicProperties}

\subsection{Collectivisation is beneficial}

It is intuitively clear that collectivisation should be beneficial. We give a formal
proof within our model.

The probability spaces we have defined depend upon the number of individuals, $n$, but if
$n<m$ there is an obvious way to map random variables defined on the probability space
for $n$ individuals to random variables defined on the probability space for $m$ individuals that preserves the random variables defining the market and the mortality of the first $n$ individuals. We shall assume henceforth that the individual gain functions are preserved under this mapping. This will happen automatically if we define our gain functions using a specification such as Epstein--Zin preferences with mortality with parameter values $(\alpha,\rho,b,a)$.

\begin{theorem}
	\label{thm:increasingValue}	
	If the individual gain functions ${\cal J}_i$ are
	concave then
	\[
	v_n \leq v_m \quad \text{if }n\leq m<\infty.
	\]
	Moreover $v_n \leq v_\infty$ in complete markets.
	In place of assuming that the market is complete, one may instead
	require that admissible strategies are integrable.
\end{theorem}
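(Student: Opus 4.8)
The plan is to prove the two comparisons separately, both by an explicit strategy-lifting argument that exploits concavity of the individual gain functions together with the fact (established earlier) that we may restrict attention to strategies in which all survivors consume the same amount.

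For the finite-to-finite comparison $v_n \le v_m$ with $n \le m < \infty$, I would take any admissible strategy for the $m$-investor fund, reduce it without loss of generality to one in which all $m$ survivors consume equally, and then restrict it to the first $n$ investors via the canonical embedding of probability spaces described just before the theorem. The key point is that in the $m$-fund the per-capita quantities behave like an average over a larger pool, so the common consumption stream available to $n$ investors sitting inside the $m$-fund is at least as good as what they could generate on their own: concretely, one shows that the fund value per survivor in the $m$-problem, when its mortality-pooling is restricted to the sub-pool of $n$, dominates the fund value per survivor achievable in the $n$-problem, because extra survivors from the remaining $m-n$ can only add wealth. Then monotonicity (or just the definition of the supremum) gives the inequality on gain functions. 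The bookkeeping with the budget equations \eqref{eq:fundValueCts}--\eqref{eq:fundValue} is routine; the substantive step is checking that the restricted strategy is admissible, i.e.\ that $F_t \ge 0$ and $\overline F_t \ge 0$ are preserved, which follows because the sub-fund is a nonnegative fraction of a nonnegative total.

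For $v_n \le v_\infty$ in complete markets, I would argue in the reverse direction: given an admissible strategy for the $n$-investor fund, I want to produce an admissible strategy for the $n=\infty$ accounting problem of \eqref{eq:fundValuePerIndividualCts}--\eqref{eq:fundValuePerIndividual} with at least as large a gain. Here completeness is what lets us replace the random survivor count $n_t/n$ by its deterministic mean $\pi_t = 1 - F_\tau(t)$: using the equivalent martingale measure $\Q^M$, the cost of financing the per-capita consumption stream $\pi_t \gamma_t$ in the infinite fund equals the $\Q^M$-expected discounted cost, and by independence of mortality and the market (and the tower property over ${\cal F}^L$) this matches the expected per-capita cost in the finite fund; one then hedges to obtain a strategy meeting the deterministic obligations exactly, which is admissible ($Y_t \ge 0$) by the martingale representation / completeness hypothesis. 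Since all survivors receive the same $\gamma$ and the individual gain function is unchanged under the embedding, ${\cal J}$ evaluated on the infinite fund equals ${\cal J}$ on the finite one, giving $v_n \le v_\infty$. The variant hypothesis — replacing completeness by integrability of admissible strategies — is handled the same way, with the integrability condition supplying exactly the control needed to justify the expectation/hedging step without invoking market completeness.

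The main obstacle I anticipate is the $n \le v_\infty$ direction: one must be careful that ``averaging out'' the idiosyncratic mortality does not secretly require borrowing against future deaths in a way that violates $Y_t \ge 0$ pathwise. The resolution is precisely the completeness (or integrability) assumption, which is why the theorem singles it out; concretely I would phrase the argument as: the finite-$n$ strategy's wealth process $F_t/n$ is, conditional on the market filtration, a supermartingale in the pooled accounting once deaths are replaced by their compensators, and completeness lets us dominate it by a genuinely self-financing nonnegative process delivering the same deterministic cashflows $\pi_t\gamma_t$. Everything else — the finite monotonicity, the reduction to equal consumption, the invariance of the gain function under the embedding — is either already available in the excerpt or a short verification.
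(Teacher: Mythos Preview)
Your argument for the finite comparison $v_n \le v_m$ runs in the wrong direction. To establish $v_n \le v_m$ you must start from an arbitrary admissible strategy for the $n$-fund and produce an admissible $m$-strategy achieving at least the same gain; only then does taking suprema give $v_m \ge v_n$. You instead start from an $m$-strategy and ``restrict to the first $n$ investors'', but the restricted consumption is adapted to all $m$ mortalities and so does not define an admissible $n$-strategy, and merely observing what the first $n$ investors receive inside the $m$-fund tells you nothing about $v_n$. The heuristic that ``extra survivors from the remaining $m-n$ can only add wealth'' is also not right as stated: extra survivors \emph{consume}; it is extra deaths that redistribute wealth to the pool, and neither effect dominates pathwise. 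The paper's argument goes the correct way: given any admissible $n$-strategy, it forms $\binom{m}{n}$ sub-funds inside the $m$-population (one for each $n$-subset), splits the budget equally among them, and runs the given strategy in each sub-fund using permutation-invariance. Each individual then receives an average of the consumptions from the $\binom{m-1}{n-1}$ sub-funds containing her, and concavity of the gain function (together with invariance) gives gain at least $v$, hence $v_m \ge v_n$.

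Your treatment of $v_n \le v_\infty$ is essentially the paper's: average the $n$-strategy's per-capita consumption over the mortality filtration (conditioning on the market), check via Fubini that the $\Q$-discounted expectation of the averaged stream is still bounded by $X_0$, invoke completeness to finance it, and compare gains. One correction: the gain on the averaged stream $\overline{\gamma}$ is not \emph{equal} to the original gain but only \emph{at least} as large, by Jensen's inequality; that is exactly where concavity enters this half of the proof, not merely in the earlier reduction to equal consumption. The integrability variant is handled in the paper not by a separate hedging argument but simply by averaging the investment weights $\bm{\alpha}$ themselves over $\P^L$, which is where the integrability hypothesis is needed.
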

\begin{proof}
	We recall that, by assumption, the individual gain functions
	${\cal J}_i$ are invariant and isomorphic. We note that to each permutation of the individuals we can associate an automorphism of the probability space which permutes the individuals. Hence a strategy
	which is effective for one set of $n$ individuals will 
	give rise to an isomorphic strategy for another set of $n$ individuals.		
	
	Suppose that $m$ is finite. Suppose we set up $\binom{m}{n}$
	funds corresponding to each possible choice of $n$ individuals from the
	full set of $m$ individuals and allocate the initial budget equally to each of these funds.  Given an admissible strategy which achieves a value $v$ for the gain function for the first $n$ individuals,
	we can use it in each of these
	$\binom{m}{n}$ funds. By the concavity of the gain function, we see
	that the resulting strategy will have a value greater than or equal to $v$. The result follows.
	
	Now consider the case when $m=\infty$. Let
	$(\bm{\gamma},\bm{\alpha})$ be an admissible strategy
	for a collective of $n$ investors.
	Our budget constraint
	together with the requirement that the
	discounted asset prices are $\Q$-measure
	martingales ensures that
	\[
	\E_{\Q\times \P^L}\left( \frac{1}{n} \sum_{i=1}^n \int \frac{\gamma^i_t}{S^1_t} \, \DT \right) \leq X_0.
	\]
	Note that $\gamma^i_t$ is non-negative. Hence by Fubini's theorem we may define a stochastic process
	$\overline{\gamma}_t$ by
	\[
	\frac{\overline{\gamma}_t}{S^1_t}:=\E_{\P^L}\left( \frac{1}{n} \sum_{i=1}^n \int \frac{\gamma^i_t}{S^1_t} \, \DT \mid \bm{S}_t \right) \leq X_0.
	\]
	where ${\bm{S}}_t$ is the vector of asset prices at time $t$.
	This will be progressively measurable with
	respect to the filtered probability space of the market
	$(\Omega^M)$ and moreover will satisfy
	\[
	\E_{\Q}\left( \int \frac{\overline{\gamma}_t}{S^1_t} \, \DT \right) \leq X_0.
	\]	
	It follows from our assumption that the market is complete that
	we can then find $\overline{\bm{\alpha}}$ which ensures that
	$(\overline{\gamma}, \overline{\bm{\alpha}})$ is an admissible strategy
	for an infinite collective.
	
	The concavity of the gain function now ensures that
	\[
	{\cal J}_i(\gamma^i, \tau_i) \leq {\cal J}_{\iota}(\overline{\gamma}, \tau_\iota).
	\]
	Hence $v_\infty \geq v_n$.	
	
	If the market is not complete, we instead use the integrability of ${\bm{\alpha}}$
	in order to define $\overline{\bm{\alpha}}$ by averaging.
\end{proof}

To prove that $v_n \to v_\infty$ as $n \to \infty$ we require a little more than concavity and invariance, one also needs some degree of continuity in the gain function when it is perturbed
by a low probability event.
As an example we prove the following in Appendix \ref{sec:ezConvergence}.

\begin{theorem}
If preferences are given by Epstein--Zin preferences with mortality with $0<\rho<1$
and $\alpha<0$ then $v_n \to v_\infty$ as $n \to \infty$.
\label{thm:ezConvergence}
\end{theorem}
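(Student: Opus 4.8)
The plan is to pair the monotonicity already in hand with a matching asymptotic lower bound. By Theorem~\ref{thm:increasingValue} the sequence $(v_n)$ is non-decreasing and bounded above by $v_\infty$ in the complete market, so it remains to show $\liminf_{n\to\infty}v_n\ge v_\infty$. Fix $\epsilon>0$ and an admissible strategy $(\tilde\gamma,\tilde\alpha)$ for the infinite collective with ${\cal J}_\iota(\tilde\gamma,\tau_\iota)\ge v_\infty-\epsilon$; since in the infinite collective the survivor proportion is the deterministic $\pi_t:=1-F_\tau(t)$, $\tilde\gamma_t$ may be taken ${\cal F}^M_t$-measurable and the budget constraint $Y_t\ge0$ forces $\E_{\Q}\big(\int\pi_t\tilde\gamma_t/S^1_t\,\DT\big)\le X_0$. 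First I would pass to a strategy with strictly positive surplus: replacing $\tilde\gamma$ by $(1-\delta)\tilde\gamma$ scales this replication cost by $(1-\delta)$ while, by continuity of the Epstein--Zin value in the consumption stream (a BSDE stability estimate in the spirit of \cite{xing}), it changes ${\cal J}_\iota$ by at most $\epsilon$ once $\delta$ is small; truncating $\tilde\gamma$ from above costs another arbitrarily small amount and keeps all the value processes below in a fixed interval $[V_{\min},0]$ on which the aggregator $f$ is Lipschitz in $v$ (the only delicate term carries a factor $\gamma^\rho$, which is harmless for $0<\rho<1$). So we may assume $\E_{\Q}(\int\pi_t\tilde\gamma_t/S^1_t\,\DT)\le(1-\delta)X_0$ and ${\cal J}_\iota(\tilde\gamma,\tau_\iota)\ge v_\infty-2\epsilon$.

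Next I would construct a strategy for the $n$-collective. Using completeness, let $\Phi_t\ge0$ be the per-capita value at time $t$ of the portfolio replicating the \emph{expected} per-capita cashflow $\pi_s\tilde\gamma_s\,\DS$ for $s\ge t$, so $\Phi_0\le(1-\delta)X_0$; run this portfolio and invest the remaining $\delta X_0$ per capita risklessly. If the $n$-fund then pays each of its $n_t$ survivors exactly $\tilde\gamma_t$ — i.e.\ withdraws $(n_t/n)\tilde\gamma_t$ per capita — its realised per-capita wealth $\widehat Y_t$ differs from $\Phi_t$ plus the (grown) buffer only through the accumulated discounted discrepancy $D_t:=\int_{[0,t)}\big(\pi_s-\tfrac{n_s}{n}\big)\tilde\gamma_s/S^1_s\,\DS$, so $\widehat Y_t\ge S^1_t(\delta X_0/S^1_0+D_t)$. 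Now $s\mapsto n_s/n$ is $1$ minus the empirical survival function of the i.i.d.\ lifetimes $\tau_i$, so by the Dvoretzky--Kiefer--Wolfowitz inequality $\E(\sup_s|\pi_s-n_s/n|^2)=O(1/n)$, whence, using $\E\big((\int\tilde\gamma_s/S^1_s\,\DS)^2\big)<\infty$ and independence of market and mortality, $\E(\sup_t D_t^2)=O(1/n)$. Hence the ruin event $A_n:=\{\inf_t\widehat Y_t<0\}\subseteq\{\sup_t(-D_t)>\delta X_0/S^1_0\}$ has $\P(A_n)=O(1/(n\delta^2))\to0$. On $A_n^c$ this strategy is admissible and delivers to each individual exactly the outcome $(\tilde\gamma_t\1_{t\le\tau_i},\tau_i)$, so — since the individual's BSDE \eqref{eq:bsde1} involves only its own consumption and death — its Epstein--Zin value equals $V_0={\cal J}_\iota(\tilde\gamma,\tau_\iota)$; on $A_n$ we fall back, at the first time $\widehat Y_t$ would turn negative, to consuming at a fixed low level (zero, or the adequacy level $a$), which still leaves the individual outcome with value bounded below uniformly in $n$.

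The main step is then to control the cost of this low-probability fallback. Writing $\gamma^{(n)}$ for the resulting $n$-fund stream, it equals $\tilde\gamma$ off $A_n$ and is a perturbation of $\tilde\gamma$ on $A_n$, with $\P(A_n)\to0$ and both streams valued in a fixed bounded set. The hard part is a stability lemma: \emph{if two admissible, uniformly bounded consumption streams agree outside an event of probability $\eta$, their Epstein--Zin gain functions differ by at most $\omega(\eta)$, where $\omega(\eta)\to0$ as $\eta\to0$.} I would prove this by writing the difference of the two solutions of \eqref{eq:bsde1} as the solution of a BSDE whose driver-difference is supported on (the intersection with) the exceptional event, and applying the standard a priori $L^2$ bound for BSDEs with jumps and drivers Lipschitz in $v$ — the point being that an $L^2$-bounded quantity restricted to a set of vanishing probability is $o(1)$, and that the mortality-martingale terms $\sum_i\zeta^i\,\ed M^i$ contribute controllably because $\tilde\gamma$ was truncated so as to keep $f$ Lipschitz and the $L^2(\tilde\lambda)$ norms finite. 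Granting the lemma, ${\cal J}_\iota(\gamma^{(n)},\tau_\iota)\ge{\cal J}_\iota(\tilde\gamma,\tau_\iota)-\omega(\P(A_n))\ge v_\infty-2\epsilon-\omega(O(1/(n\delta^2)))$; since $v_n\ge{\cal J}_\iota(\gamma^{(n)},\tau_\iota)$, letting $n\to\infty$ gives $\liminf_n v_n\ge v_\infty-2\epsilon$, and then $\epsilon\to0$. The rest — monotonicity, the DKW bound, and the Markov/Doob estimate on $D_t$ — is routine; the genuine obstacle is the Epstein--Zin stability lemma, which is why the boundedness reductions on $\tilde\gamma$ are set up in advance.
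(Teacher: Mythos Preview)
Your high-level strategy coincides with the paper's: use Theorem~\ref{thm:increasingValue} for the upper bound, then for the lower bound take a near-optimal consumption stream $\gamma^\infty$ for the infinite collective, scale it down to create slack, transplant it into the $n$-collective using concentration of the empirical survival fraction $n_t/n$ around $\pi_t$, and control the resulting utility loss via an $L^2$ BSDE stability estimate once the driver has been made Lipschitz. The paper follows exactly this outline.

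The executions differ in two places. First, where you truncate $\tilde\gamma$ from above and then \emph{assert} that the value process stays in a compact interval $[V_{\min},0]$ on which $f(\gamma,\cdot)$ is Lipschitz, the paper instead truncates the \emph{driver}: it passes to $F^m(t,\gamma,v)=\tfrac{b\alpha}{\rho}e^{-bt}(\gamma^\rho\wedge m)(v\wedge m)^{1-\rho/\alpha}$, which is globally Lipschitz in $v$ by construction, proves $\EZ^m\uparrow\EZ$, and works throughout at level~$m$. This sidesteps the a~priori bound on $V$ that you need but do not prove; your claim ``truncating $\tilde\gamma$ from above \ldots\ keeps all the value processes in $[V_{\min},0]$'' is the delicate step, since bounding $\gamma$ \emph{above} does not by itself prevent $|V|$ from being large (it is small $\gamma$, and in particular your zero-consumption fallback on $A_n$, that one must check keeps $V$ controlled). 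Second, your buffer-plus-ruin-time construction is more elaborate than necessary: the paper simply takes the stream $\lambda\,\1_{G_{n,t}}\,\gamma^\infty_t$, where $G_{n,t}=\{n_s\le\lambda^{-1}\E n_s\text{ for all }s\le t\}$, which is \emph{automatically} admissible for the $n$-collective because on $G_{n,t}$ the per-capita withdrawal $(n_t/n)\lambda\gamma^\infty_t$ is at most $\pi_t\gamma^\infty_t$. The driver discrepancy is then supported on $\lnot G_{n,t}$, and the stability bound (Proposition~2.4 of \cite{dumitrescu}) reduces directly to $\P(\lnot G_{n,T-\delta})+\delta$, with no need for the discrepancy process $D_t$, the integrability hypothesis $\E\big(\int\tilde\gamma_s/S^1_s\,\DS\big)^2<\infty$, or the Markov/Doob step. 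The paper controls $\P(\lnot G_{n,T-\delta})$ by a short reduction-to-finitely-many-times lemma rather than DKW, though your DKW route would work equally well here. In short: your plan is sound, but the paper's driver-level truncation and indicator-of-good-event construction remove precisely the two loose ends you left (the a~priori bound on $V$ and the extra $L^2$ moment on $\int\tilde\gamma/S^1$).
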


See \cite{ab-exponential} for the case of exponential utility.

\begin{remark}
The proof of Theorem \ref{thm:increasingValue} will continue to work if one introduces systematic longevity risk. The main challenge in extending \ref{thm:ezConvergence} lies in defining
the utility function in a richer probability space. Nevertheless, the continuity required
to establish \ref{thm:ezConvergence} can be expected for any gain function that yields a
well-posed optimization problem.
\end{remark}

\subsection{When is constant consumption optimal?}

Defined benefit pensions and annuities typically aim to provide constant consumption stream in real terms.
It is therefore natural to ask when constant consumption is optimal. Our
next result gives sufficient conditions. We assume for technical reasons
that the asset price processes $S^i_t$ are all given by diffusion processes. We will call
a market that satisfies this assumption a diffusion market.

\begin{theorem}
	Constant consumption is optimal for von Neumann--Morgernstern
	preferences with $b=0$
	in complete diffusion markets with $\P=\Q$ and $r_t=0$ when $n=\infty$. If $p_t=0$ for all times
	except $T-\delta t$ then constant consumption is optimal for all $n$.
	For example, in a Black--Scholes--Merton market with no drift and risk free rate
	of zero one has $\P=\Q$.
	\label{thm:constantConsumption}
\end{theorem}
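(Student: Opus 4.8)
The plan is to linearise the concave utility $u$ at the consumption level one expects to be optimal and then read optimality off directly from the market budget constraint, finishing by exhibiting an explicit strategy that attains the resulting bound.

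I would treat $n=\infty$ first. Write $\pi_t=1-F_\tau(t)$ for the (here deterministic) proportion of survivors and set $\mu:=\int\pi_t\,\DT$, assumed finite (the cases $\mu=\infty$ and $X_0=0$ being degenerate). Combining \eqref{eq:distributionObjectiveInfinite} with the von Neumann--Morgenstern form of Example~\ref{def:vnm} and $b=0$, the fund's gain function is ${\cal J}(\gamma)=\E\big(\int_0^{\tau_\iota}u(\gamma_t)\,\DT\big)$; note that since $u$ is increasing and real-valued this lies in $[u(0)\mu,+\infty)$, so it is a well-defined number. Because $Y_t\geq0$ makes the $S^1$-discounted wealth-plus-consumption process a $\Q$-supermartingale, exactly as in the proof of Theorem~\ref{thm:increasingValue}, and because $r_t=0$ forces $S^1_t\equiv1$ while $\P=\Q$, every admissible $\gamma$ obeys the budget inequality $\int\pi_t\,\E(\gamma_t)\,\DT\leq X_0$. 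Now put $c:=X_0/\mu>0$ and let $p$ be a supergradient of $u$ at $c$ (it exists since $c$ is interior to $\R_{\geq0}$ and $u$ is concave and finite there, and $p\geq0$ since $u$ is increasing). Concavity gives $u(\gamma_t)\leq u(c)+p(\gamma_t-c)$ pointwise, so
\[
{\cal J}(\gamma)\;\leq\;\E\Big(\int_0^{\tau_\iota}\big[u(c)+p(\gamma_t-c)\big]\,\DT\Big)\;=\;u(c)\,\mu+p\Big(\int\pi_t\,\E(\gamma_t)\,\DT-c\mu\Big)\;\leq\;u(c)\,\mu,
\]
using $p\geq0$ and $c\mu=X_0$. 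Moreover the constant stream $\gamma_t\equiv c$ attains this bound: holding the whole fund in the risk-free asset and paying $c$ to each survivor gives per-capita wealth $Y_t=X_0-c\int_0^t\pi_s\,\DS$, which is non-negative for all $t$ because $c\int\pi_s\,\DS=c\mu=X_0$, so that strategy is admissible with gain $u(c)\,\mu$. Hence $v_\infty=u(c)\,\mu$, achieved by constant consumption $X_0/\mu$.

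For finite $n$ under the hypothesis $p_t=0$ except at $T-\delta t$, every individual dies exactly at the last grid point, so there is no mortality risk at all, $n_t\equiv n$ on ${\cal T}$, and the fund is simply $n$ identical single-agent problems with budget $X_0$. The same argument applies with $\pi_t\equiv1$, so $\mu=\int\DT=T$ and $c=X_0/T$: by concavity one restricts to equal consumption $\gamma_t$ across agents (the averaging step of Section~\ref{sec:homogeneousModel}), the budget inequality of the proof of Theorem~\ref{thm:increasingValue} reads $\sum_{t\in{\cal T}}\E(\gamma_t)\,\delta t\leq X_0$, and linearising $u$ at $c=X_0/T$ shows that $\gamma_t\equiv X_0/T$ is optimal for every $n$.

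The hypotheses enter exactly where one expects: $b=0$, $r_t=0$ and $\P=\Q$ together make the weighting of consumption in the budget constraint proportional to its weighting in the gain function, which is precisely what reduces the first-order condition to ``$u'(\gamma_t)$ constant in $t$'', i.e.\ to constant consumption; the completeness and diffusion assumptions are the standing technical hypotheses of this subsection, ensuring the budget inequality above is valid. The step requiring genuine care is the rigorous justification of the per-capita accounting and the supermartingale budget inequality in the $n=\infty$ model, together with the measurability bookkeeping behind the independence of $\tau_\iota$ from the market (needed to pass from $\E\int_0^{\tau_\iota}\gamma_t\,\DT$ to $\int\pi_t\,\E(\gamma_t)\,\DT$); granted those, the linearisation step is routine.
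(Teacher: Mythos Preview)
Your proof is correct, but it follows a different route from the paper's. The paper argues by symmetry: it invokes the classification of standard probability spaces to identify the filtered market with a product of circles, and then averages any admissible strategy over the rotation group to produce a deterministic strategy that is at least as good (using concavity and completeness). Having reduced to deterministic strategies, it then solves the resulting one-dimensional problem \eqref{eq:measureProblem} by a second, measure-theoretic averaging trick. In fact the paper explicitly remarks that, once one has the deterministic budget constraint, ``the result for $n=\infty$ is now easy to prove by brute force, but we wish to use symmetry''---your argument is essentially that brute-force route, carried out one level earlier so that the reduction to deterministic strategies is never needed.

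The trade-offs are as follows. Your linearisation at the candidate optimum $c=X_0/\mu$ is entirely elementary: it needs only the supergradient inequality and the $\Q$-supermartingale budget bound, and it dispatches randomness and time-dependence in a single stroke without touching probability-space classification or group actions. Its cost is that one must guess the optimiser in advance, and it gives no structural explanation of why the hypotheses $b=0$, $r_t=0$, $\P=\Q$ are exactly what is needed. The paper's symmetry approach is heavier machinery but is conceptually revealing: it makes transparent that the assumptions conspire to make the problem invariant under a transitive group action, so that the optimiser \emph{must} be the unique fixed point, namely the constant; and the first averaging step yields, as a by-product, the independently useful observation that optimal strategies may be taken $\Omega^\perp$-independent (Remark~\ref{remark:omegaPerp}). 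For the finite-$n$ clause both proofs coincide: no mortality before $T-\delta t$ collapses the problem to the $n=\infty$ case.
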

\begin{proof}	
	We will give our proof using symmetry arguments rather than
	direct calculation. Our aim in using this approach is to explain
	why this result feels ``obvious''.	
	
	First consider the case $n=\infty$.
	
	By the classification of standard probability spaces, the filtered probability space $(\Omega, {\cal F}_{t \in {\cal T}}, \P)$ is isomorphic to the Cartesian product 
	$(S^1)^{\cal T}$ where $S^1$ is the circle of circumference $1$. Note that we are using the assumption that we are in a diffusion market to ensure that these probability
	spaces are standard and atomless.
	Thus rotations of the circles give rise to market isomorphisms (see \cite{armstrongClassification} for a definition
	of market isomorphism). This gives an action of the Lie group $(S^1)^{\cal T}$ on our market and hence on the space of admissible investment strategies. Given any strategy we may apply these rotations
	to obtain new, equivalent strategies. Choose an invariant metric on our Lie group, so that we may define the average strategy.
	Note that we are using market completeness here, just as we did in the proof of Theorem \ref{thm:increasingValue}, to ensure
	that the average strategy exists.
	By the concavity of our maximization problem and Jensen's inequality, this averaged
	strategy will outperform the original strategy. But the averaged strategy will be invariant under the group action and hence deterministic.
	
	Since we may assume our strategy is deterministic, it will be a pure bond investment so $Y_{t+\delta t}=\tilde{Y}_t$. From our dynamics \eqref{eq:fundValuePerIndividual} we have
	\[
	Y_{t+\delta t} = Y_t - (1-F_{\tau}(t))\gamma_t
	\]
	Solving this difference equation yields the (intuitively obvious) budget constraint
	\begin{equation}
	0 \leq X_0 = \int_0^T (1-F_{\tau}(t)) \gamma_t \, \DT.
	\label{eq:vnmBudgetConstraint}
	\end{equation}
	Where $F_{\tau}$ is the distribution function of $\tau$. The expected utility is
	\[
	\int_0^T (1-F_{\tau}(t)) u(\gamma_t) \, \DT.
	\]
	The result for $n=\infty$ is now easy to prove by brute force, but we wish to use symmetry.
	
	Given a measurable space with non-negative, standard measure $\mu$ we can consider the more general problem
	\begin{equation}
	\begin{aligned}
	\underset{\gamma \in L^1(\mu)}{\text{maximize}}
	& \int u(\gamma) \, \mu \\
	\text{subject to } & \gamma\geq 0 \\
	\text{and} & \int \gamma \, \mu = X_0
	\end{aligned}
	\label{eq:measureProblem}
	\end{equation}
	The optimization we wish to solve is just the special case with $\mu=(1-F_\tau(t)) \, \DT$.
	If $\mu$ is isomorphic to an invariant measure $m$ on the circle $S^1$, we can use the rotation argument above to prove that the optimizer is given by
	constant $\gamma^*$, specifically $\gamma^*= \frac{X_0}{\int \mu}$.
	
	For more general $\mu$, suppose we are given a function $\gamma$
	which satisfies the constraints. We let $\gamma \times \id$
	be the function on the space with product measure $\mu \times \lambda$ where $\lambda$ is the Lebesgue measure on $[0,1]$. But
	$\mu \times \lambda$ is an atomless, positive probability measure
	so is isomorphic to an invariant measure $m$ on the circle.
	Hence $\gamma \times \id$ is outperformed by the constant function $\gamma^*$
	on $\mu \times \lambda$. This can be projected to a constant function
	$\gamma^*$ on $\mu$. Since the objective functions on $\mu \times \lambda$ and $\mu$
	are equal for functions which are constant on the $\lambda$ factor, this means
	that $\gamma$ itself is outperformed by $\gamma^*$.
	Hence the constant function $\gamma^*$ is
	optimal on $\mu$ itself.
		
	In the case when $p_t=0$ for all times except $T-\delta t$, the
	optimization problems for different values of $n$ are all equivalent,
	so the result follows.
\end{proof}

This result provides a mathematical explanation for the intuitive appeal 
of annuities and defined benefit pensions. However, there is the obvious caveat
that the assumptions are extremely strong. One expects that
with any slight weakening of these assumptions, constant cashflows
will no longer be optimal. We prove such a converse
in the case of Epstein--Zin preferences in the paper \cite{ab-ez}.

One interpretation of this result that is worth noting is that it suggests taking the discount rate $b=0$ in our preferences
will accord better with investor expectations than choosing other values of $b$. This accords with the theoretical
discussion of discount rates in pension modelling in \cite{ab-main}.

\begin{remark}
\label{remark:omegaPerp}	
We note that the same averaging argument can be used to show that the probability space $\Omega^\perp$ 	
can be safely ignored in a complete market with concave preferences.
\end{remark}

\section{Heterogeneous Funds}
\label{sec:heterogeneous}

We now wish to consider how to manage consumption and investment when
the individuals are not identical.

For simplicity, let us assume that there are a finite number, $\ell$, of
possible initial types of investor $\{\zeta_1, \ldots, \zeta_\ell\}$.
We write ${\cal Z}$ for the set of types of investor.
Each type $\zeta$ describes the initial capital, mortality distribution
and preferences of the individual. 
Our argument will carry through with appropriate modifications
to the case of a compact space of types rather than a finite set of types. However, giving such an account would increase the technical burden to
little real purpose.

It is hypothetically appealing to identify the optimal investment strategy for the heterogeneous fund. However, we would need to define optimality, and 
this will require us to specify preferences over outcomes between different types of investor.
Describing such preferences seems challenging and the choice of
preferences would be controversial.

Instead we will take an axiomatic approach. Rather than  find
an optimal scheme for managing a fund we merely seek an {\em acceptable}
management scheme. We will define the notion of acceptable
axiomatically, and will show that for large funds all acceptable
management schemes yield similar outcomes for the investors.

We assume that the preferences
of an individual of type $\zeta$ are given by a concave, invariant gain function
${\cal J}_\zeta(\gamma,\tau)$. 
Let us write $v(n,\zeta)$ for the value function for individuals
of type $\zeta$ investing in this market in a collective of size
$n$
as modelled in Section \ref{sec:homogeneousModel}.

We assume that
\[
\lim_{n\to \infty} v(n,\zeta) = v(\infty,\zeta).
\]
Sufficient conditions to ensure this are described in the papers \cite{ab-exponential} and \cite{ab-ez}.

The initial population is determined
by a vector $\bm{\zeta} \in {\cal Z}^n$, where
component $i$ of $\bm{\zeta}$ denotes the type of the $i$-th
individual. 
A {\em management scheme} ${\cal M}$ is a function acting on
vectors $\bm{\zeta}$ of arbitrary length $n$
and which yields a strategy
$(\bm{\gamma}, \bm{\alpha})$
where $\bm{\gamma}_t$ is a vector of consumptions
of length $n$ with $i$-th component being the consumption
of individual $i$, and $\bm{\alpha}$ is an investment
strategy. We require that the combined consumption and investment are
self-financing. Thus
\[
{\cal M}: \bigsqcup_{n=1}^\infty {\cal Z}^n
\to 
\bigsqcup_{n=1}^\infty L^0(\R^n \times \R^k).
\]
and ${\cal M}$ maps the $i$-th component of the first union
into the $i$-th component of the second union.

Our first axiom for ${\cal M}$ is one of fairness, which as we have seen is a property
that arises automatically in any concave maximization problem.
\begin{axiom}{I1}
	\label{axiom:fairness}
	All surviving individuals of type $\zeta$ consume the same amount.
	That is
	\begin{enumerate}[(i)]
		\item 
		if ${\cal M}(\bm{\zeta})=(\bm{\gamma}, \bm{\alpha})$
		then
		$\gamma^i_t = \gamma^j_t$ when $\bm{\zeta}(i)=\bm{\zeta}(j)$, 
		$t\leq \tau_i$ and 	$t\leq \tau_j$.
		\item  If $\bm{\zeta}^\prime$ is obtained by permuting the elements of $\bm{\zeta}$, then
		${\cal M}(\bm{\zeta}^\prime)$ is obtained by the corresponding
		permutation of the consumption streams of ${\cal M}(\bm{\zeta})$
	\end{enumerate}	
\end{axiom}

Let the proportions of different individuals prevailing in the
population be given by a vector of weights $\omega(\bm{\zeta})=(\omega_{\zeta_1}, \ldots, \omega_{\zeta_\ell})$ with $0<\omega_\zeta<1$, $\omega_\zeta$ rational and
$\sum_{\zeta \in {\cal Z}} \omega_{\zeta}=1$.  Let $\lcm(\omega)$ denote the
lowest common multiple of the denominators of the fractions $\omega_i$,
so we know that the population is some integer multiple of $\lcm(\omega)$.

By Axiom \ref{axiom:fairness} we may define
\[
a_{\cal M}(\omega,n,\zeta)
\]
to be the value of the gain function achieved by an individual
of type $\zeta$ for a population of size $n$. This is defined if
$n$ is any integer multiple of $\lcm(\omega)$.

If $a_{\cal M}(\omega, n, \zeta) < v(n \, \omega_\zeta, \zeta)$
then the investment strategy for the heterogeneous fund will
not be able to attract investors of type $\zeta$ as they would be 
better off following the strategy of Section \ref{sec:homogeneousModel}.
Theorem \ref{thm:increasingValue} then suggests that this
would be to the detriment of all other investors in the
heterogeneous fund, as increasing collectivisation should always
be beneficial. These observations motivate the following
axioms.

\begin{axiom}{I2}
	A management scheme is monotone if:
	\[
	a_{\cal M}(\omega,m,\zeta) \leq a_{\cal M}(\omega,n,\zeta)
	\]
	if $m\leq n$.
	\label{axiom:montone}
\end{axiom}

\begin{axiom}{I3}
	A management scheme achieves the {\em performance standard} if
	\[
	a_{\cal M}(\omega,n,\zeta) \geq v(n \, \omega_\zeta, \zeta).
	\]
	for all $n$ and $\zeta$.
	\label{axiom:performance}
\end{axiom}

\begin{definition}
	A management scheme is {\em acceptable} if it satisfies Axioms \ref{axiom:fairness},
	\ref{axiom:montone} and \ref{axiom:performance}.
\end{definition}

By the monotonicity property, we may unambiguously define
\[
a_{\cal M}(\omega,\infty,\zeta) = \lim_{n\to \infty} a_{\cal M}(\omega,n,\zeta).
\]
By the performance standard and our assumption on the convergence of $v(n,\zeta)$ as $n\to \infty$ we see that for any acceptable management
scheme
\begin{equation}
a_{\cal M}(\omega,\infty,\zeta) \geq v(\infty, \zeta).
\label{eq:heterogeneousLowerBound}
\end{equation}

The scheme of simply grouping all individuals of a given
type together into a homogeneous fund and managing that according
to the model of Section \ref{sec:homogeneousModel} will yield
an acceptable management scheme which achieves the lower bound
\eqref{eq:heterogeneousLowerBound}. We call this the
{\em basic management scheme}.

We wish to show that the basic management scheme is asymptotically
optimal in complete
markets. The key observation is that collective investment
provides no substantive benefit for the investment problem 
in a complete market without mortality.

Let us consider how to write a collective investment problem
for $n$ individuals without mortality investing in our market.
We choose consumption $\gamma^i_t$ for individual $i$, and
overall investment proportions $\alpha_t$. The total fund value will
be given by the budget equations \eqref{eq:fundValue}, and hence
the set of admissible controls for this problem will be given
by \eqref{eq:admissibleControls}. The difference will be that the
preferences of the individual will depend only upon cashflows received
and not upon mortality, so we will suppose that for individual $i$ we have a gain function $\hat{\cal J}_i(\gamma^i_t)$ which depends only
upon the cashflows. We define the value function 
\[
\hat{v}_i := \sup_{{\cal A}_1} \hat{\cal J}_i(\gamma^i_t)
\]
where ${\cal A}_n$ is the set of acceptable admissible controls for $n$ individuals.
Although we have defined the set of admissible controls for $n$ individuals,
we do not write down an optimization problem for $n$ individuals as we
do not know what the objective should be across a heterogeneous collective.

Our next result shows that in a complete market without mortality there is no real
benefit in considering collectivised problems as, however we
select an admissible control (by solving an optimization problem or otherwise), it will never bring a substantive advantage to any
individual unless it also gives a substantive disadvantage to
some other individual. In other words, one cannot pay Paul without robbing Peter.

\begin{lemma}
	\label{lemma:collectivisationNoHelp}	
	Suppose the gain functions  $\hat{\cal J}_i(\gamma_t)$
	are concave, monotone and do not saturate\footnote{The definitions of these
		terms were given for gain functions over cashflows with mortality,
		but the corresponding definitions for $\hat{\cal J}$ should
		be obvious}
	and satisfy $\hat{\cal J}_i(\gamma_t)>-\infty$
	for positive cashflows $\gamma_t$ .
	Suppose that
	$\hat{\cal J}_i(\gamma_t) = -\infty$ whenever $\gamma_t$
	is negative on a set of finite measure.
	
	Let $i_*$ be an individual, then for any $\epsilon_1>0$
	there exists $\epsilon_2>0$ such that for any
	admissible investment consumption strategy for all the investors with $\gamma^i$ satisfying
	\begin{equation}
	\hat{\cal J}_{i_*}(\gamma^{i_*}_t) \geq \hat{v}_{i_*} + \epsilon_1
	\label{eq:optimalForEachI}
	\end{equation}
	there is an investor $i$ such that
	\[
	\hat{\cal J}_i(\gamma^i_t) \leq \hat{v}_i - \epsilon_2.
	\]
\end{lemma}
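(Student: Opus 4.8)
The plan is to exploit completeness of the market to replace the collective problem by a single static budget constraint, and then use strict concavity-type behaviour (encoded by the non-saturation hypothesis) to argue that a uniformly bounded gain above the individual optimum for one investor forces a uniformly bounded loss for someone else. First I would fix an equivalent martingale measure $\Q^M$ with state-price deflator $H_t := (S^1_t)^{-1} \frac{\ed\Q^M}{\ed\P}\big|_{{\cal F}_t}$ (appropriately interpreted on the mortality-free market $\Omega^M$). Since there is no mortality here, an admissible self-financing strategy for the $n$ investors with total initial budget $\sum_i B_i$ yields consumption streams $\gamma^i$ satisfying $\sum_{i=1}^n \E_{\Q^M}\!\big(\int \gamma^i_t/S^1_t \, \DT\big) \le \sum_i B_i$; conversely, by completeness, any collection of nonnegative $\gamma^i$ meeting this aggregate budget inequality is attainable. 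Hence the value $\hat v_i$ for the individual problem is $\sup\{\hat{\cal J}_i(\gamma) : \gamma \ge 0,\ \E_{\Q^M}(\int \gamma_t/S^1_t\,\DT) \le B_i\}$, and the collective feasible set is exactly the Minkowski-type constraint that the individual budgets must sum to no more than the total. This reduces the lemma to a purely convex-analytic statement about splitting a fixed resource.

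Next I would argue the quantitative gap. Suppose $\gamma^{i_*}$ satisfies $\hat{\cal J}_{i_*}(\gamma^{i_*}) \ge \hat v_{i_*} + \epsilon_1$. Let $c^{i_*} := \E_{\Q^M}(\int \gamma^{i_*}_t/S^1_t\,\DT)$ be the ``cost'' of this stream. The first claim is that there is some $\eta = \eta(\epsilon_1) > 0$, independent of the strategy, such that $c^{i_*} \ge B_{i_*} + \eta$: indeed if $c^{i_*} \le B_{i_*}$ then $\gamma^{i_*}$ is feasible for the individual problem and $\hat{\cal J}_{i_*}(\gamma^{i_*}) \le \hat v_{i_*}$, a contradiction; and more generally, since the individual value function $B \mapsto \hat v_i(B)$ is concave, nondecreasing, and — crucially, by the non-saturation hypothesis — strictly increasing (a strictly positive perturbation of a finite-value cashflow strictly increases the gain, and by completeness extra budget buys a uniform positive perturbation), one gets that exceeding $\hat v_{i_*}$ by $\epsilon_1$ requires spending at least $\eta(\epsilon_1)$ more than $B_{i_*}$, uniformly. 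So the resource drained from the other investors is at least $\eta$. Since $\sum_{i \ne i_*} c^i \le \sum_{i \ne i_*} B_i - \eta$, at least one investor $i$ has $c^i \le B_i - \eta/(n-1)$. Finally, monotonicity and concavity of $\hat{\cal J}_i$ together with non-saturation give a reverse modulus: a shortfall of at least $\eta/(n-1)$ in budget costs that investor at least some $\epsilon_2 = \epsilon_2(\eta, n) > 0$ in gain relative to $\hat v_i$ (here $\hat{\cal J}_i > -\infty$ on positive cashflows keeps everything finite and the bound nontrivial). Setting $\epsilon_2$ accordingly completes the argument.

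The main obstacle I anticipate is making the two ``moduli'' genuinely uniform over all admissible strategies rather than strategy-dependent. The upper modulus (extra gain $\Rightarrow$ extra cost) must hold for arbitrary $\gamma^{i_*}$, which is why one wants the clean characterization $\hat v_{i_*} = \sup$ over the budget ball: then $\hat{\cal J}_{i_*}(\gamma^{i_*}) > \hat v_{i_*}$ directly forces $c^{i_*} > B_{i_*}$, and the \emph{size} of the excess is controlled by the concavity of the value function $B \mapsto \hat v_{i_*}(B)$ plus the fact that non-saturation prevents this function from being locally flat — one should check that non-saturation of $\hat{\cal J}$ really does transfer to strict monotonicity of the value function, which uses completeness to convert a budget increment into an $L^\infty$ (or at least pointwise-a.e.) positive increment of the optimizer's cashflow. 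The lower modulus (budget shortfall $\Rightarrow$ gain loss) is the symmetric statement and is where $n$ legitimately enters: the per-investor shortfall is only $\eta/(n-1)$, so $\epsilon_2$ will depend on $n$, which is exactly what the statement allows. A minor technical point is handling the possibility that some $\hat{\cal J}_i$ is discontinuous or that optimizers fail to exist; one works with near-optimizers and the sup-characterization throughout, so no existence is needed. I would also remark that the discrete-time budget equations \eqref{eq:fundValue} give the same aggregate constraint via the $\Q$-martingale property of $(S^1_t)^{-1} S^i_t$, so the argument is uniform across the continuous and discrete conventions.
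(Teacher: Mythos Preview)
Your proposal is correct and follows essentially the same route as the paper: reduce via completeness to a static $\Q$-cost budget, use the concave value function $b \mapsto \hat v_{i_*}(b)$ to extract a uniform cost excess $\eta$ (the paper's $\delta$) from the gain excess $\epsilon_1$, pigeonhole to find a short-changed investor, and convert that budget shortfall into a uniform gain loss $\epsilon_2$. One small clarification: the upper modulus (gain excess $\Rightarrow$ cost excess) needs only \emph{continuity} of $\hat v_{i_*}(\cdot)$, which comes from concavity and finiteness alone, so non-saturation is not ``crucial'' there; non-saturation is what drives the \emph{lower} modulus, where you need $\hat v_i(B_i - \eta/(n-1)) < \hat v_i(B_i)$ strictly.
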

\begin{proof}
	Let us write $v_i(b)$ for the value function for individual
	$i$ as a function of their budget $b$. By Lemma \eqref{lemma:continuityOfV}
	below, $v_i$ is continuous as a function of $b$ for any
	$b > 0$. We write $B_i$ for the budget of individual $i$.
	
	In a complete market, we call a measurable non-negative cashflow $\gamma_t$
	a derivatives contract and we call the discounted $\Q$-measure expectation of $\gamma_t$ the price of this contract. This price may be infinite, which means that the cashflows cannot be super-replicated by any admissible trading strategy. If the price is finite, the contract can be replicated
	by an admissible trading strategy with initial budget given by the price.
	Note that the requirement that $\gamma_t$ is non-negative
	ensures that the price of derivative contracts is additive: if negative
	infinities were allowed as prices this would not be the case.
	
	If we write ${\cal D}_b$ for the set of derivatives contracts
	of price less than or equal to $b$, we have:
	\[
	\hat{v}_i(b) = \sup_{\gamma \in {\cal D}_b} \hat{\cal J}(\gamma).
	\]
	Here we have used our assumption that negative cashflows yield
	a value of $-\infty$ for the gain function.
	
	By the continuity of $v_{i_*}$, there is a price, $\delta$,
	such that if $\hat{\cal J}_{i_*}(\gamma^{i_*}_t)\geq v_{i_*} + \epsilon_1$,
	then the price of the derivative contract with payoff $\gamma^{i_*}_t$
	is at least $B_{i_*}+\delta$.
	
	By the monotonicity and non-saturation of the gain functions,
	together with the concavity given by Lemma \eqref{lemma:continuityOfV}
	below
	\[
	\hat{v}_i\left(B_{i}-\frac{\delta}{\ell-1}\right) < \hat{v}_i(B_i).
	\]
	Let
	\[
	\epsilon_2 = \inf_{i\neq i_*} \left\{\hat{v}_i(B_i)- \hat{v}_i\left(B_{i}-\frac{\delta}{2(\ell-1)}\right) \right\}.
	\]
	Any derivative contract with cashflows $\gamma^i$ satisfying
	\[
	\hat{\cal J}_i(\gamma^i_t) \leq \hat{v}_i - \epsilon_2.
	\]
	will cost at least $B_{i}-\frac{\delta}{2(\ell-1)}$. Hence
	the total cost of an investment strategy yielding all the 
	cashflows $\gamma^i$ is at least 
	\[
	B_{i}+\delta + \sum_{i\neq i_*} \left( B_{i}-\frac{\delta}{2(\ell-1)} \right)
	= \sum_i B_{i} + \frac{\delta}{2}
	\]
	which is greater than the total budget and hence cannot be admissible.
\end{proof}

\begin{lemma}
	\label{lemma:continuityOfV}
	Consider investment without mortality in a homogeneous market. Suppose
	an individual's gain function over consumption, $\hat{\cal J}(\gamma)$,
	is concave and monotone. Let ${\cal A}_b$ be the set of admissible 
	consumption, investment strategies with initial budget $b$.
	Define
	\[
	\hat{v}(b) = \sup_{(\gamma,\alpha)\in{\cal A}_b} \hat{\cal J}_i(\gamma).
	\]
	The function $v$ is concave and hence $v$ is
	continuous on any open set on which it is finite.
\end{lemma}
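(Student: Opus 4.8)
The plan is to reduce everything to a single structural fact: that $\hat v$ is a concave function of the budget $b$. Once concavity is in hand, the continuity claim is just the classical result that a concave function $(0,\infty)\to\R\cup\{\pm\infty\}$ is continuous (indeed locally Lipschitz) on the interior of the set where it is finite, so in particular on any open set on which it takes finite values. So the real content is the concavity of $\hat v$, and I would prove it by the same ``average two strategies'' device used in the proof of Theorem~\ref{thm:increasingValue}, adapted to the fact that here we are averaging over initial budgets rather than over funds.

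Concretely, I would fix budgets $b_1,b_2>0$, a weight $\lambda\in(0,1)$, and set $b:=\lambda b_1+(1-\lambda)b_2$. Given admissible strategies $(\gamma^1,\alpha^1)\in{\cal A}_{b_1}$ and $(\gamma^2,\alpha^2)\in{\cal A}_{b_2}$ with (pre-consumption) wealth processes $F^1,F^2\ge 0$, I would consider $F:=\lambda F^1+(1-\lambda)F^2$, which starts at $b$ and stays non-negative, and show via \eqref{eq:fundValueCts} (respectively \eqref{eq:fundValue} in the discrete-consumption case) that $F$ is exactly the wealth process generated by the consumption stream $\gamma:=\lambda\gamma^1+(1-\lambda)\gamma^2$ together with the investment proportions
\[
\alpha^j_s:=\frac{\lambda\,\alpha^{1,j}_s F^1_s+(1-\lambda)\,\alpha^{2,j}_s F^2_s}{F_s}
\]
(defined arbitrarily, say as $0$, on $\{F_s=0\}$, where the increment of $F$ vanishes in any case). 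The point to check is that this is an admissible control for budget $b$: non-negativity of $F$ is immediate, and since $F^1_s,F^2_s\ge 0$ with $\sum_j\alpha^{1,j}_s\le 1$ and $\sum_j\alpha^{2,j}_s\le 1$ one gets $\sum_j\alpha^j_s\le(\lambda F^1_s+(1-\lambda)F^2_s)/F_s=1$. Thus the convex combination of the consumption streams of any two admissible strategies is itself realised by an admissible strategy with the convex combination of the two budgets.

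Having established this, concavity of $\hat v$ follows from concavity of $\hat{\cal J}$: from the displayed construction, $\hat v(b)\ge\hat{\cal J}(\lambda\gamma^1+(1-\lambda)\gamma^2)\ge\lambda\hat{\cal J}(\gamma^1)+(1-\lambda)\hat{\cal J}(\gamma^2)$, and taking suprema over $(\gamma^1,\alpha^1)\in{\cal A}_{b_1}$ and $(\gamma^2,\alpha^2)\in{\cal A}_{b_2}$ along maximising sequences (the inequality being trivial if either $\hat v(b_i)=+\infty$) gives $\hat v(b)\ge\lambda\hat v(b_1)+(1-\lambda)\hat v(b_2)$. The main obstacle, and essentially the only subtlety, is precisely the step above: the wealth dynamics are bilinear in $(\alpha,F)$ rather than linear, so one cannot simply convex-combine the pairs $(\gamma^j,\alpha^j)$; the resolution is to convex-combine at the level of the wealth \emph{processes} and re-read off the corresponding proportions, then verify that the budget constraint $\sum_j\alpha^j_s\le 1$ is preserved and that nothing degenerates on $\{F_s=0\}$.
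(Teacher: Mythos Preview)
Your proposal is correct and follows the same overall strategy as the paper: convex-combine two admissible strategies for budgets $b_1,b_2$ to obtain an admissible strategy for $\lambda b_1+(1-\lambda)b_2$, then invoke concavity of $\hat{\cal J}$ and take suprema.

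The one noteworthy difference is that you are more careful than the paper on exactly the point you flag. The paper simply asserts that $(\lambda\gamma_1+(1-\lambda)\gamma_2,\ \lambda\alpha_1+(1-\lambda)\alpha_2)$ is admissible with budget $\lambda b_1+(1-\lambda)b_2$. As you observe, with proportional investment the wealth dynamics \eqref{eq:fundValueCts} are bilinear in $(\alpha,F)$, so the naive convex combination of the proportion vectors does not in general produce the wealth process $\lambda F^1+(1-\lambda)F^2$, and hence non-negativity of the combined wealth is not immediate from the paper's formulation. Your fix --- convex-combine at the level of wealth processes and then read off the wealth-weighted proportions $\alpha^j_s=(\lambda\alpha^{1,j}_sF^1_s+(1-\lambda)\alpha^{2,j}_sF^2_s)/F_s$ --- is the correct way to make this step rigorous, and your verification that $\sum_j\alpha^j_s\le 1$ and that the construction is harmless on $\{F_s=0\}$ is exactly what is needed. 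So your argument is a cleaned-up version of the paper's, filling a small gap rather than taking a genuinely different route.
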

\begin{proof}
	Given two budgets $b_1$, $b_2>b$
	then let $(\gamma_i,\alpha_i)$ be admissible strategies for each
	budget. So for any $\lambda \in [0,1]$, $(\lambda \gamma_1 + (1-\lambda) \gamma_2, \lambda \alpha_1 + (1-\lambda) \alpha_2)$ is an
	admissible strategy with budget $\lambda b_1 + (1-\lambda) b_2$.
	By the concavity of $\hat{\cal J}$,
	\[
	\hat{\cal J}(\lambda \gamma_1 + (1-\lambda) \gamma_2 )
	\geq \lambda \hat{\cal J}(\gamma_1) + (1-\lambda) \hat{\cal J}(\gamma_2 ).
	\]
	Hence
	\[
	\hat{v}(\lambda b_1 + (1-\lambda) b_2 )
	\geq \lambda \hat{v}(b_1) + (1-\lambda) \hat{v}(b_2).
	\]
\end{proof}

\begin{remark}
	Lemma \ref{lemma:collectivisationNoHelp} is only true in complete
	markets. If two individuals have different risk or consumption preferences
	in an incomplete market then it can often be beneficial to design
	a derivatives contract to the mutual advantage of both parties.
	This is the essential purpose of derivatives contracts and explains
	why there is a market for such contracts.
\end{remark}

We are now ready to prove the main result of this section.
\begin{theorem}
	\label{thm:acceptableFundTheorem}
	For any acceptable management scheme in a complete market
	\begin{equation*}
	a_{\cal M}(\omega,\infty,\zeta) = v(\infty, \zeta)
	\end{equation*}
	so long as all individual gain functions are concave, 
	monotone, invariant and do no saturate, and so long as
	\begin{equation}
	\lim_{n \to \infty} v(n,\zeta) = v(\infty, \zeta)>-\infty.
	\label{eq:convergenceRequirement}
	\end{equation}
	for any positive budget.
\end{theorem}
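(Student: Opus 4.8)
The plan is to add, to the lower bound \eqref{eq:heterogeneousLowerBound} (which already gives $a_{\cal M}(\omega,\infty,\zeta)\geq v(\infty,\zeta)$), the reverse inequality, and to obtain it by contradiction: if some type outperformed $v(\infty,\zeta)$, I would re-encode the heterogeneous fund as a collective investment problem among $\ell$ fictitious, mortality-free agents and then invoke Lemma~\ref{lemma:collectivisationNoHelp}, which forbids helping one agent without hurting another in a complete market without mortality. So assume, for some type $\zeta_*$ and some $\epsilon_1>0$, that $a_{\cal M}(\omega,\infty,\zeta_*)>v(\infty,\zeta_*)+2\epsilon_1$.

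First I would build the fictitious market. Writing ${\cal F}^M$ for the market filtration, call a process market-measurable if it is ${\cal F}^M$-progressively measurable, and for each type $\zeta$ let $\pi^\zeta_t:=1-F^\zeta_\tau(t)$ be its deterministic survival curve, as in the infinite homogeneous fund. Because the market is complete, the admissibility constraint $Y_t\geq0$ in \eqref{eq:fundValuePerIndividualCts} and \eqref{eq:fundValuePerIndividual} is equivalent to a pricing constraint on the common consumption, so that
\[
v(\infty,\zeta)=\sup\left\{{\cal J}_\zeta(\gamma,\tau_\zeta)\;:\;\gamma\geq0\text{ market-measurable},\ \E_\Q\left[\int \pi^\zeta_t\,\gamma_t/S^1_t\,\DT\right]\leq X_\zeta\right\}.
\]
Now introduce a fictitious agent~$\zeta$ with budget $B_\zeta:=\omega_\zeta X_\zeta$, trading in the same complete market, with gain function on nonnegative market-measurable cashflows $\hat{\cal J}_\zeta(c):={\cal J}_\zeta\big(t\mapsto c_t/(\omega_\zeta\pi^\zeta_t),\ \tau_\zeta\big)$. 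A change of variables in the display gives $\hat v_\zeta(B_\zeta)=v(\infty,\zeta)$, and since $c\mapsto c/(\omega_\zeta\pi^\zeta)$ is a positive linear map with $\pi^\zeta\leq1$, the properties ``concave, monotone, does not saturate, $>-\infty$ on positive cashflows, $-\infty$ on cashflows negative on a set of positive measure'' pass from ${\cal J}_\zeta$ to $\hat{\cal J}_\zeta$; together with \eqref{eq:convergenceRequirement} (finiteness of $\hat v_\zeta$ near $B_\zeta$) this puts us in the setting of Lemma~\ref{lemma:collectivisationNoHelp} for $\ell$ agents, with data not depending on $n$.

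The crux is, for each $n$ that is a multiple of $\lcm(\omega)$, to extract from ${\cal M}(\bm\zeta)$ an admissible strategy for these $\ell$ agents under which agent~$\zeta$ attains at least $a_{\cal M}(\omega,n,\zeta)$. By Axiom~\ref{axiom:fairness} all surviving type-$\zeta$ individuals share one consumption process, and by part~(ii) of that axiom together with invariance each attains the same value $a_{\cal M}(\omega,n,\zeta)$. Fix a type-$\zeta$ individual $i$, with consumption $\gamma^i$ (zero after $\tau_i$) and death time $\tau_i$; I would average $\gamma^i$ over the idiosyncratic mortality of the other members while holding the market and the event $\{t\leq\tau_i\}$ fixed. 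By progressive measurability the result agrees on $\{t\leq\tau_i\}$ with the market-measurable process $m^\zeta_t:=\E[\gamma^i_t\mid{\cal F}^M_t]/\pi^\zeta_t$, and by concavity of ${\cal J}_\zeta$ together with completeness (so the averaged cashflow is again replicable), in the spirit of the averaging arguments in the proofs of Theorems~\ref{thm:increasingValue} and~\ref{thm:constantConsumption}, ${\cal J}_\zeta(m^\zeta,\tau_\zeta)\geq{\cal J}_\zeta(\gamma^i,\tau_i)=a_{\cal M}(\omega,n,\zeta)$. Put $c^\zeta:=\omega_\zeta\pi^\zeta m^\zeta$, so $\hat{\cal J}_\zeta(c^\zeta)\geq a_{\cal M}(\omega,n,\zeta)$. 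Finally, the self-financing property of ${\cal M}(\bm\zeta)$ together with the $\Q$-martingale property of discounted asset prices --- exactly as in the proof of Theorem~\ref{thm:increasingValue} --- gives $\sum_\zeta(n\omega_\zeta)\,\E_{\Q\times\P^L}[\int\gamma^i_t/S^1_t\,\DT]\leq\sum_\zeta(n\omega_\zeta)X_\zeta$, and since $\E_{\Q\times\P^L}[\int\gamma^i_t/S^1_t\,\DT]=\E_\Q[\int\pi^\zeta_t m^\zeta_t/S^1_t\,\DT]$ this rearranges (divide by $n$) to $\sum_\zeta\E_\Q[\int c^\zeta_t/S^1_t\,\DT]\leq\sum_\zeta B_\zeta$; so $(c^\zeta)_{\zeta\in{\cal Z}}$ is admissible for the fictitious collective.

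To finish, apply Lemma~\ref{lemma:collectivisationNoHelp} with $i_*=\zeta_*$ to get $\epsilon_2>0$ depending only on $\epsilon_1$ and the ($n$-independent) data. Using \eqref{eq:convergenceRequirement} pick $N_1$ with $v(n\omega_\zeta,\zeta)>v(\infty,\zeta)-\epsilon_2$ for all $\zeta$ when $n\geq N_1$, and using Axiom~\ref{axiom:montone} and $a_{\cal M}(\omega,n,\zeta_*)\uparrow a_{\cal M}(\omega,\infty,\zeta_*)$ pick $N_2$ with $a_{\cal M}(\omega,n,\zeta_*)>\hat v_{\zeta_*}(B_{\zeta_*})+\epsilon_1$ for $n\geq N_2$. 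Take $n\geq\max(N_1,N_2)$ a multiple of $\lcm(\omega)$ and form $(c^\zeta)_\zeta$ as above; then $\hat{\cal J}_{\zeta_*}(c^{\zeta_*})\geq a_{\cal M}(\omega,n,\zeta_*)>\hat v_{\zeta_*}+\epsilon_1$, so the lemma yields a type $\zeta$ with $\hat{\cal J}_\zeta(c^\zeta)\leq\hat v_\zeta(B_\zeta)-\epsilon_2=v(\infty,\zeta)-\epsilon_2$; but $\hat{\cal J}_\zeta(c^\zeta)\geq a_{\cal M}(\omega,n,\zeta)\geq v(n\omega_\zeta,\zeta)>v(\infty,\zeta)-\epsilon_2$ by Axiom~\ref{axiom:performance} and the choice of $N_1$, a contradiction. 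I expect the main obstacle to be the reduction step, and inside it the averaging that swaps an individual's market-and-mortality-dependent realized consumption for a market-measurable cashflow paired with a fresh idiosyncratic death time: one must check it uses only concavity and completeness, and that the induced per-type cashflows conserve the total $\Q\times\P^L$-price exactly, so that no law-of-large-numbers fluctuation of the surviving proportion (of order $(n\omega_\zeta)^{-1/2}$, and otherwise only absorbable via the continuity behind \eqref{eq:convergenceRequirement}) enters.
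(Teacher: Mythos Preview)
Your proposal is correct and follows essentially the same route as the paper's own proof: reduce the heterogeneous fund to a collection of $\ell$ mortality-free ``virtual'' agents with consumption $c^\zeta=\omega_\zeta\pi^\zeta\gamma^\zeta$, budget $\omega_\zeta X_\zeta$, and gain function $\hat{\cal J}_\zeta$, verify $\hat v_\zeta(\omega_\zeta X_\zeta)=v(\infty,\zeta)$, average the finite-$n$ strategy over $\P^L$ (using concavity and completeness exactly as in the proof of Theorem~\ref{thm:increasingValue}) to obtain an admissible market-measurable allocation, and then invoke Lemma~\ref{lemma:collectivisationNoHelp} for the contradiction. The only organisational differences are that the paper first formalises the ``infinite heterogeneous collective'' via the dynamics \eqref{eq:budgetHeterogeneousCollective} before applying the lemma, and obtains the lower bound $a_{\cal M}(\omega,n,\zeta)\geq v(\infty,\zeta)-\tfrac{1}{3}\epsilon_2$ via Axiom~\ref{axiom:montone} and the already-established limit rather than directly through Axiom~\ref{axiom:performance} as you do; neither difference is substantive.
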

\begin{proof}
    We will prove the result in the case of discrete time consumption. The case of continuous
    time consumption is similar.

	By Axiom \ref{axiom:performance},  $\lim_{n \to \infty} a_{\cal M}(\omega, n, \zeta) \geq \lim_{n \to \infty} v(n \omega^\zeta, \zeta)$. By assumption  
	$\lim_{n \to \infty} v(n, \zeta)= v(\infty,\zeta)$. Hence $a(\omega, \infty, \zeta) = \lim_{n \to \infty} a_{\cal M}(\omega, n, \zeta) \geq v(\infty, \zeta)$.

	Let us now define the meaning of an admissible strategy for an infinite heterogeneous
	collective where the types of each individual are given by the
	proportions $\omega$. We will
	assume that surviving individuals of a given
	type, $\zeta$, all consume $\gamma^\zeta_t$ at time $t$.
	Hence the total consumption per person (i.e.\ of any type, including both survivors and the deceased) is
	\[
	\sum_{\zeta \in {\cal Z}} \gamma^{\zeta}_t \pi^{\zeta}_t \omega^{\zeta}
	\]
	where $\pi^{\zeta}_t$ denotes the proportion of individuals of type $\zeta$
	who survive to time $t$. Note that $\pi^{\zeta}_t=1-F_{\tau_\zeta}(t)$
	where ${\tau}_\zeta$ is a random variable distributed according
	to the time-of-death distribution for individuals of type $\zeta$.
	
	If we let $Y_t$ denote the fund value per person before consumption and
	$\overline{Y}_t$ denote the fund value per person after consumption we
	have the following budget equations for the dynamics of $Y_t$ and $\overline{Y}_t$. Let us write $B^\zeta$ for the initial
	budget of individuals of type $\zeta$. Then the fund value
	per person for the infinite heterogenous collective should be defined
	to follow the dynamics
	\begin{equation}
	\begin{split}
	Y_t &= \begin{cases}
	\sum_{\zeta \in {\cal Z}} \omega^\zeta B^\zeta & \text{t = 0} \\
	\lim_{h\to 0+}\overline{Y}_{t-h} &  t \in {\cal T}\setminus\{0\} \\
	\overline{Y}_t & \text{otherwise.}
	\end{cases} \\
	\overline{Y}_t &=
	\begin{cases}
	Y_t - \sum_{\zeta \in {\cal Z}} \gamma^{\zeta}_t \pi^{\zeta}_t \omega^{\zeta}
	& t \in {\cal T} \\
	\overline{Y}_{t^\prime}  + \sum_{i=1}^k \int_{t^\prime}^t \alpha^i_s \overline{Y}_s \, \ed S^i_s
	& t^\prime \in {\cal T} \text{ and } t^\prime\leq t < t^\prime+\delta t.
	\end{cases}
	\end{split}
	\label{eq:budgetHeterogeneousCollective}
	\end{equation}
	
	We may alternatively view the equations above as describing the dynamics of a fund
	of ``virtual individuals'' where each virtual individual represents
	the interests of an infinite fund of individuals all of type $\zeta$.
	To see this, we define $\hat{\gamma}^\zeta_t=\gamma^{\zeta}_t \pi^{\zeta}_t \omega^{\zeta}$. We say that	
	virtual individual $\zeta$ consumes an amount $\hat{\gamma}^\zeta_t$ at each time $t \in {\cal T}$. We say that the initial budget
	of virtual individual $\zeta$ is $\omega^\zeta B^\zeta$. We may now view
	equation \eqref{eq:budgetHeterogeneousCollective} as giving
	the dynamics of the total fund value before consumption $Y_t$
	of a collective of these heterogeneous virtual individuals.
	This observation will allow us to apply Lemma
	\ref{lemma:collectivisationNoHelp} to the collective investment problem
	with mortality.
	
	Let us define a gain function (without mortality) $\hat{\cal J}_\zeta$ for 
	the virtual individual $\zeta$ by
	\[
	\hat{\cal J}_\zeta(\hat{\gamma}^\zeta_t):={\cal J}_\zeta( \gamma^{\zeta}_t, {\tau}_\zeta)
	\]
	
	Let us write $\hat{v}_\zeta(b)$ for the value function of a virtual individual with this gain function with an initial budget $b$. We see that $\hat{v}_\zeta( \omega^\zeta B_\zeta)=v(\infty, \zeta)$.
	
	Suppose $(\bm{\gamma},\bm{\alpha})$ is a strategy for the infinite
	heterogeneous collective
	and let $\zeta_*$ be a chosen type of individual.
	It follows by Lemma \ref{lemma:collectivisationNoHelp} that 
	for any given $\epsilon_1>0$
	such that 
	\begin{equation}
	\hat{\cal J}_{\zeta_*}(\hat{\gamma}^{\zeta_*}_t) \geq v(\infty, \zeta_*) + \epsilon_1
	\end{equation}
	there exists $\epsilon_2>0$ and a type $\zeta$ such that
	\begin{equation}
	\hat{\cal J}_{\zeta}(\hat{\gamma}^\zeta_t) \leq v(\infty, \zeta) - \epsilon_2.
	\end{equation}	
	Hence given any $\epsilon_1>0$
	such that 
	\begin{equation}
	{\cal J}_{\zeta_*}(\gamma^{\zeta_*}_t, \tau_{\zeta_*}) \geq v(\infty, \zeta_*) + \epsilon_1
	\label{eq:sigmaStarCondition}
	\end{equation}
	there exists $\epsilon_2>0$ and a type $\zeta$ such that
	\begin{equation}
	{\cal J}_{\zeta}(\gamma^\zeta_t,\tau_{\zeta}) \leq v(\infty, \zeta) - \epsilon_2.
	\label{eq:epsilon2Condition}
	\end{equation}
	
	Now let us suppose for a contradiction that for some $\epsilon_1>0$,
	$\zeta_*$ and $N$ we have $a(\omega, N, \zeta_*) \geq v(\infty, \zeta_*) + 2 \epsilon_1$. Note that we then have $a(\omega, n, \zeta_*) \geq v(\infty, \zeta_*) + 2 \epsilon_1$ for all $n \geq N$. We may then choose $\epsilon_2>0$ as described in the preceding paragraph.

	There are only finitely many types $\zeta$ and we know $\lim_{n\to \infty} a_{\cal M}(\omega,n,\zeta)\geq  v(\infty, \zeta)$. So for sufficiently large $n$ we may assume that 
	\[
	a_{\cal M}(\omega, n, \zeta) \geq v(\infty, \zeta)-\tfrac{1}{3} \epsilon_2.
	\]
	Hence for such $n$ we may find an investment-consumption strategy $(\bm{\gamma}, \bm{\alpha})$ for a collective of $n$ individuals
	which yields the same consumption for all surviving individuals
	of a given type and which satisfies
	\begin{equation*}
	{\cal J}_{\zeta_*}(\gamma^{\zeta_*}_t, \tau_{\zeta^*}) \geq v(\infty, \zeta_*) + \epsilon_1
	\end{equation*}
	and which also satisfies
	\begin{equation*}
	{\cal J}_{\zeta}(\gamma^\zeta_t, _{\zeta}) \geq v(\infty, \zeta) - \tfrac{2}{3} \epsilon_2
	\end{equation*}
	for all types $\zeta$.
	
	Since the discounted asset prices $S^i_t$ are $\Q$-martingales
	we have
	\[
	\E_{\Q \times \P^L} 
	\left( \sum_{\zeta \in {\cal Z}} 
	\int e^{-rt} \omega^\zeta \gamma_t^\zeta \, \DT \right)
	\leq \sum_{\zeta \in {\cal Z}} \omega^\zeta B^\zeta.
	\]
	Since the consumption $\gamma_t$ is non-negative we have by
	Fubini's theorem that
	\[
	\overline{\gamma}^\zeta_t := \E_{\P^L} 
	\left( 
	\int e^{-rt} \omega^\zeta \gamma_t^\zeta \, \DT \mid \bm{S}_t \right)
	\]
	is a progressively measurable process for the filtered probability
	space $(\Omega^M, {\cal F}_t, \P^M)$ and satisfies
	\[
	\sum_{\zeta \in {\cal Z}} \E_{\Q} \left(
	\int e^{-rt} \omega^\zeta \overline{\gamma}_t^\zeta \, \DT \right)
	\leq \sum_{\zeta \in {\cal Z}} \omega^\zeta B^\zeta.
	\]
	By the complete market assumption we can then find
	an investment strategy $\overline{\bm{\alpha}}$ that funds
	$\overline{\bm{\gamma}}$. Hence we have found an
	admissible strategy 
	$(\overline{\bm{\gamma}},\overline{\bm{\alpha}})$
	for the infinite homogeneous collective which by
	the concavity of the preferences will satisfy
	\begin{equation*}
	{\cal J}_{\zeta_*}(\overline{\gamma}^{\zeta_*}_t, \tau_{\zeta^*}) \geq v(\infty, \zeta_*) + \epsilon_1
	\end{equation*}
	and which also satisfies
	\begin{equation*}
	{\cal J}_{\zeta}(\overline{\gamma}^\zeta_t, _{\zeta}) \geq v(\infty, \zeta) - \tfrac{2}{3} \epsilon_2
	\end{equation*}
	for all types $\zeta$. This contradicts equations \eqref{eq:sigmaStarCondition} and \eqref{eq:epsilon2Condition}.
	
	We deduce that $a(\omega, N, \zeta) \leq v(\infty, \zeta)$ for
	all $\zeta$, which gives the result.
\end{proof}

The financial significance of Theorem \ref{thm:acceptableFundTheorem}
is that all acceptable strategies are asymptotically
equivalent in a complete market, in particular this applies to the
Black--Scholes--Merton market.

The result is analogous to the classical result that
in the Black--Scholes--Merton market any derivative has a unique price
independent of the preferences of the investor. This too is an asymptotic result, in the sense that it assumes continuous time
and zero transaction costs.

In practice no two individuals are alike, and
so our assumption of a finite number of distinct types can
be criticised.
However, simple modifications of our strategy for heterogeneous funds will yield good
results if there are a large number of similar individuals. We give a concrete algorithm in \cite{ab-main}
and show that it achieves approximately $98\%$ of the maximum benefit of collectivisation for a heterogeneous
fund of only $100$ investors.

\begin{remark}
It is natural to ask how our results extend if one incorporates systematic longevity risk.
The simplest approach would be to assume that there is a complete market of contracts
on the risk factors determining systematic longevity risk, in which case a similar result
can be expected to hold. Longevity derivatives do exist, but at present the market
is not very liquid. Our results suggest that each type of individual will have
different views on the attractiveness of longevity risk, which would imply that the
development of collective funds should result in a more liquid market in longevity
derivatives.
\end{remark}

\appendix

\section{Proof of Theorem \ref{thm:ezConvergence}}
\label{sec:ezConvergence}

\begin{proof}[Proof of Theorem \ref{thm:ezConvergence}]
Recall that when working with Epstein--Zin preferences we assume $\Omega^\perp$ is trivial.

To prove the existence of Epstein--Zin utility with mortality following the methods of
\cite{xing} one first makes the change of variables
$(\tilde{V}_t,\tilde{Z}_t, \tilde{\zeta}_t)=\alpha e^{-\frac{b \alpha t}{\rho}} (V_T,Z_t,\zeta_t)$ 
to transform \ref{eq:bsde1} to the BSDE
\begin{equation}
\ed \tilde{V}_t =
F(t,\gamma_t, \tilde{V}_t) {\1}_{t \leq \tau} \, \ed t
- \tilde{Z}_t \, \ed W_t
- \sum_{i=1}^n \tilde{\zeta}_t^i \, \ed M^i_t, \quad 0 \leq t \leq T; \quad \tilde{V}_T = e^{ -\frac{bT\alpha}{\rho}} \alpha U
\label{eq:bsde2}
\end{equation}
where
\[
F(t,\gamma_t,v):= \frac{b \alpha \exp( -b t )}{\rho} \gamma_t^\rho v^{1-\frac{\rho}{\alpha}}.
\]
To prove the existence and uniqueness of this BSDE, one next solves the family of BSDEs indexed by $m\in \R_{\geq 0}$,
\begin{equation}
\ed \tilde{V}^m_t =
F^m(t,\gamma_t, \tilde{V}^m_t) {\1}_{t \leq \tau} \, \ed t
- \tilde{Z}_t \, \ed W_t
- \sum_{i=1}^n \tilde{\zeta}_t^i \, \ed M^i_t, \quad 0 \leq t \leq T; \quad \tilde{V}_T = e^{ -\frac{bT\alpha}{\rho}} \alpha U
\label{eq:bsde3}
\end{equation}
where
\[
F^m(t,\gamma_t,v):=b \frac{\alpha}{\rho} \exp( -b t )(\gamma_t^\rho \wedge m)
(v \wedge m)^{1-\frac{\rho}{\alpha}}.
\]
The driver $F^m$ has the property that $y \to F^m(t,\gamma_t,y)$ is Lipschitz (since
our parameter assumptions ensure $1-\frac{\rho}{\alpha}>1$)
and so one can use the theory of \cite{dumitrescu} to obtain the existence of a unique
solution to this BSDE. The comparison theorem ensures that $\tilde{V}^m$ is
non-negative and decreasing in $m$. The proof of Proposition 2.2 in \cite{xing} then
shows that the $\tilde{V}=\lim_{m \to \infty}\tilde{V}^m$ can be extended to give
a solution to \eqref{eq:bsde2}.

Let us define $\EZ^{m}(\gamma,\tau):=\frac{1}{\alpha}\tilde{V}^m_0$. We have that
$EZ^m(\gamma, \tau)$ is finite, non-positive and increasing in $m$ with limit given
by the Epstein--Zin utility with mortality which we denote $\EZ(\gamma,\tau)$.

\smallskip

Let $M < v_\infty$ and $\epsilon>0$ be given.
We may find an admissible consumption stream $\gamma^\infty_t$ for the
problem of an infinite
collective with initial budget $B$ such that $\EZ(\gamma^\infty_t, \tau^\iota) \geq M$. 

Since $\EZ$ is concave and finite, for any admissible $\gamma$,
the map $\lambda\to \EZ( \lambda \gamma^\infty_t, \tau^\iota)$ is continuous on $\R_{\geq 0}$.
Hence we may choose $\lambda\in(0,1)$ such that $\EZ(\lambda \gamma^\infty_t, \tau^\iota) \geq M-\epsilon$. By the convergence of $\EZ^m$ in $m$, we may then find $m$ such that
\begin{equation}
\EZ^m(\lambda \gamma^\infty_t, \tau^\iota) \geq M-2 \epsilon.
\label{eq:choiceOfM}
\end{equation}

Suppose that we have an initial budget of $B$ for the problem of a collective of
$n$ investors. Recall that $n_t$ denotes the number of survivors at time $t$.
Let $G_{n,t}$ be the event defined by
\[
G_{n,t} = \{\omega \mid n_s(\omega)\leq \frac{1}{\lambda} \E(n_s) \, \forall 0\leq s \leq t\}.
\]
We write $\lnot G_{n,t}$ for its compliment.
The consumption stream $\lambda \1_{G_{n,t}} \gamma_t^\infty$ will be admissible, as the consumption per survivor will always be less than $\gamma_t$.

Define $F^{m,\infty}(t,v):=F^m(t,\lambda \gamma_t^\infty,v)$ and
$F^{m,n}(t,v):=F^m(t,\lambda \1_{G_{n,t}} \gamma_t^\infty,v)$ for finite $n$. Write
$(\tilde{V}^{m,n},\tilde{Z}^{m,n},\tilde{\zeta}^{m,n})$
for the solution to \eqref{eq:bsde3} for the driver $F^{m,n}$ for both
finite and infinite values of $n$, and define $\EZ^{m,n}:=\frac{1}{\alpha}\tilde{V}^{m,n}$

Differentiating the expression for $F^{m,n}(t,v)$ with respect to $v$,
we obtain
the following Lipschitz estimate for the drivers
\[
|F^{m,n}(t,y_1)-F^{m,n}(t,y_2)| \leq \left( 1 - \frac{\rho}{\alpha} \right) b \frac{|\alpha|}{\rho} m^{1-\frac{\rho}{\alpha}} |y_1-y_2|=:C_m |y_1-y_2|
\]
for an appropriately defined constant $C_m$.

Let us define $e=|\tilde{V}^{m,\infty}-\tilde{V}^{m,n}|$.
By Proposition 2.4 of \cite{dumitrescu}, we obtain the following bound on $e$:
\[
e^2
\leq \eta_m \E [ \int_0^T e^{\beta_m s} |F^{m,\infty}(s,\tilde{V}^{m,\infty})
- F^{m,n}(s,\tilde{V}^{m,n}) |^2 \, \ed s ],
\]
where $\eta_m:=\frac{1}{C_m^2}$ and $\beta_m:=3 C_m^2 + 2 C_m$. Inserting
the definitions of $F^{m,n}$ we obtain
\[
e^2 \leq \eta_m \E[ 
\int_0^T  \frac{b \alpha}{\rho} e^{\beta_m s} ((\lambda \1_{\lnot G_{n,s}} \gamma^\infty_s)^\rho \wedge m)(\tilde{V}^{m,\infty} \wedge m)^{1-\frac{\rho}{\alpha}} \, \ed s.
\]
Splitting the integral into two regions, we find that for $\delta\in(0,T)$,
\begin{align}
e^2 &\leq \eta_m  \frac{ b \alpha m^{2-\frac{\rho}{\alpha}}}{\rho} e^{\beta_m T} 
\E[ \int_0^{T-\delta} \1_{\lnot G_{n,s}} \, \ed s + \int_{T-\delta}^T \ed s ] \nonumber \\
&\leq \eta_m  \frac{ b \alpha m^{2-\frac{\rho}{\alpha}}}{\rho} e^{\beta_m T} 
\E[ \int_0^{T-\delta} \1_{\lnot G_{n,T-\delta}} \, \ed s + \delta ] \quad \text{as }\lnot G_{n,s} \subseteq \lnot G_{n,t} \text{ if } s\leq t \nonumber \\
&\leq \tilde{C}_m (T
\P( \lnot G_{n,T-\delta} ) + \delta ) \label{eq:eBound}
\end{align}
for an appropriately defined $C_m$. Choose $\delta$ such that
$\tilde{C}_m \delta \leq \frac{1}{2}\alpha^2 \epsilon^2$. Now use the
fact that $P(\lnot G_{n,T-\delta})\to 0$ as $n \to \infty$ (as follows readily from Lemma \ref{lemma:finiteTimePoints}, below) to choose
$n$ such that $\tilde{C}_m (T
\P( \lnot G_{n,T-\delta} ) \leq \frac{1}{2}\alpha^2 \epsilon^2$. By \eqref{eq:eBound} we will then have
that $e \leq |\alpha| \epsilon$. We then have
\[
v_n \geq \EZ( \lambda \1_{G_{n,t}} \gamma^\infty_t ) \geq |\EZ^{m,n}|
\geq |\EZ^{m,\infty}| - \frac{e}{|\alpha|} \geq |\EZ^{m,\infty}| - \epsilon \geq M - 3\epsilon.
\]
where we have used in sequence: the definition of $v_n$; the convergence of the increasing sequence in $m$ given by $\EZ^{m,n}$; the definitions of $e$ and $\EZ^{m,n}$; our bound for $e$; equation \eqref{eq:choiceOfM}.
So by Theorem \ref{thm:increasingValue}, $\lim_{n \to \infty}v_n=v_\infty$.
\end{proof}

We now prove the Lemma used in the proof.

\begin{lemma}
    Let $T^*$ be minimum time by which
    an individual is almost sure to have
    died.
	For a continuous mortality distribution,
	given a time point $0\leq t_0<T^*$, for any $\epsilon \in (0,1)$ there exists
	a finite set of points $t_i \in [0,t_0)$, indexed by $i \in I$ such that
	\[
	\P\left( \forall t \in [0,t_0] \,:\, n_t \leq \left( \frac{1}{1-\epsilon} \right)^2 \E(n_t) \right) 
	\geq 
	\P\left( \forall i \in I \,:\, n_{t_i} \leq \left( \frac{1}{1-\epsilon} \right) \E(n_{t_i}) \right)
	.
	\]
	\label{lemma:finiteTimePoints}
\end{lemma}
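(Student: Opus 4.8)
The plan is to reduce the uniform‑in‑$t$ estimate to an estimate at finitely many points by exploiting two monotonicities: pathwise the survivor count $t\mapsto n_t$ is non‑increasing, and the deterministic function $t\mapsto\E(n_t)=n\bigl(1-F_\tau(t)\bigr)$ is continuous and non‑increasing, and, crucially, stays strictly positive on $[0,t_0]$ because $t_0<T^*$ forces $F_\tau(t_0)<1$.

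First I would construct the finite set of time points. Writing $\pi_t:=1-F_\tau(t)$, the function $\pi$ is continuous and non‑increasing on the compact interval $[0,t_0]$ and is bounded below by $\pi_{t_0}>0$. Hence $\pi$ is uniformly continuous, and we may pick a partition $0=s_0<s_1<\dots<s_N=t_0$ of small enough mesh that $\pi_{s_j}-\pi_{s_{j+1}}\le \epsilon\,\pi_{t_0}$ for each $j$; since $\pi_{s_j}\ge\pi_{t_0}$ this yields the multiplicative bound $\pi_{s_{j+1}}\ge(1-\epsilon)\,\pi_{s_j}$. Take $I=\{0,1,\dots,N-1\}$ and $t_i=s_i$; these all lie in $[0,t_0)$ as required.

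Next I would establish the pathwise event inclusion. Fix $\omega$ in the event $\{\,n_{t_i}\le\frac{1}{1-\epsilon}\E(n_{t_i})\ \forall i\in I\,\}$ and let $t\in[0,t_0]$; choose $j$ with $t\in[s_j,s_{j+1}]$. Since $n_{\cdot}$ is non‑increasing, $n_t\le n_{s_j}$; since $\pi$ is non‑increasing and $\pi_{s_{j+1}}\ge(1-\epsilon)\pi_{s_j}$, we get $\E(n_t)=n\pi_t\ge n\pi_{s_{j+1}}\ge(1-\epsilon)\,n\pi_{s_j}=(1-\epsilon)\E(n_{s_j})$. Combining these,
\[
n_t \;\le\; n_{s_j} \;\le\; \frac{1}{1-\epsilon}\,\E(n_{s_j}) \;\le\; \Bigl(\frac{1}{1-\epsilon}\Bigr)^{2}\E(n_t).
\]
Thus the event on the right‑hand side of the asserted inequality is contained in the event on its left‑hand side, and the probability inequality follows by monotonicity of $\P$.

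The only step requiring any care is the construction of a partition with the multiplicative gap $\pi_{s_{j+1}}\ge(1-\epsilon)\pi_{s_j}$; this is precisely where the hypothesis $t_0<T^*$ is used, guaranteeing $\inf_{[0,t_0]}\pi>0$ so that the (additive) uniform continuity of $\pi$ can be upgraded to a multiplicative bound. Everything else is bookkeeping with the two monotonicities.
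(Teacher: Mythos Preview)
Your proof is correct and follows essentially the same approach as the paper: both construct a finite partition of $[0,t_0]$ on which consecutive values of $\E(n_{\cdot})$ differ by at most a factor $\tfrac{1}{1-\epsilon}$, and then use the pathwise monotonicity of $n_t$ together with the monotonicity of $\E(n_t)$ to deduce the event inclusion. The only cosmetic difference is in how the partition is built---the paper defines the $t_i$ greedily by taking $t_i=\inf\{t:\E(n_t)\le\tfrac{1}{1-\epsilon}\E(n_{t_{i-1}})\}$, whereas you obtain the multiplicative gap from uniform continuity of $\pi$ and the strict positivity $\pi_{t_0}>0$; both routes use $t_0<T^*$ in exactly the same way.
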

\begin{proof}
	We define $t_i$ inductively. If $t_{i-1}=0$, we are done and take the index 
	set $I=\{0,1,\ldots,i-1\}$. Otherwise define
	\[
	t_i = \inf \left\{ t \mid t=0 \text{ or } \E(n_{t}) \leq \frac{1}{1-\epsilon} \E(n_{t_{i-1}}) \right\}.
	\]
	If $t_i\neq 0$ we see $\E(n_{t_{i-1}}) \geq \frac{1}{1-\epsilon} \E(n_{t_i})$,
	so for sufficiently large $i$ we must have $t_i=0$. Hence the index set, $I$, is finite.
	
	Given $t \in [0,t_0]$ we can find $i \in I$ with $t_i \leq t \leq t_{i-1}$.
	Suppose 
	\[
	n_t > \left( \frac{1}{1-\epsilon}\right)^2 \E( n_t)	
	\]
	then we have
	\[
	n_{t} > \left( \frac{1}{1-\epsilon}\right)^2 \E( n_{t_i})
	\geq \left( \frac{1}{1-\epsilon}\right)^2 \E( n_{t_{i-1}})
	\geq \left( \frac{1}{1-\epsilon}\right) \E( n_{t}).
	\]
\end{proof}

\bibliography{collectivization}
\bibliographystyle{plain}

\end{document}